\newtheorem{theorem}{Theorem}
\newtheorem{corollary}{Corollary}
\newtheorem{lemma}{Lemma}
\newtheorem{definition}{Definition}
\begin{document}
\title{Efficient Strong Privacy-Preserving Conjunctive Keyword Search Over Encrypted Cloud Data}
	
\author{Chang~Xu,~\IEEEmembership{Member,~IEEE,}
	Ruijuan~Wang,
	Liehuang~Zhu,~\IEEEmembership{Member,~IEEE,}
	Chuan~Zhang,~\IEEEmembership{Member,~IEEE,}
	Rongxing Lu,~\IEEEmembership{Fellow,~IEEE,}
	and~Kashif~Sharif,~\IEEEmembership{Senior~Member,~IEEE}
	
	\IEEEcompsocitemizethanks{
		\IEEEcompsocthanksitem Chang Xu, Liehuang Zhu, and Chuan Zhang are with School of Cyberspace Science and Technology, Beijing Institute of Technology, Beijing 100081, China.\protect\\
		E-mail: \{xuchang, liehuangz, chuanz\}@bit.edu.cn
		\IEEEcompsocthanksitem Ruijuan Wang and Kashif Sharif are with School of Computer Science and Technology, Beijing Institute of Technology, Beijing 100081, China.\protect\\
		E-mail: \{ruijuanw, kashif\}@bit.edu.cn
		\IEEEcompsocthanksitem Rongxing Lu is with the Faculty of Computer Science, University of New Brunswick, Fredericton, NB E3B 5A3, Canada.\protect\\
		E-mail: rlu1@unb.ca.}
	}

\markboth{Journal of \LaTeX\ Class Files,~Vol.~x, No.~x,~x}%
{Shell \MakeLowercase{\textit{et al.}}: Efficient Strong Privacy-Preserving Conjunctive Keyword Search Over Encrypted Cloud Data}

\IEEEtitleabstractindextext{%
	\begin{abstract}
	Searchable symmetric encryption (SSE) supports keyword search over outsourced symmetrically encrypted data. Dynamic searchable symmetric encryption (DSSE), a variant of SSE, further enables data updating. Most DSSE works with conjunctive keyword search primarily consider forward and backward privacy. Ideally, the server should only learn the result sets involving all keywords in the conjunction. However, existing schemes suffer from keyword pair result pattern (KPRP) leakage, revealing the partial result sets containing two of query keywords. We propose the first DSSE scheme to address aforementioned concerns that achieves strong privacy-preserving conjunctive keyword search. Specifically, our scheme can maintain forward and backward privacy and eliminate KPRP leakage, offering a higher level of security. The search complexity scales with the number of documents stored in the database in several existing schemes. However, the complexity of our scheme scales with the update frequency of the least frequent keyword in the conjunction, which is much smaller than the size of the entire database. Besides, we devise a least frequent keyword acquisition protocol to reduce frequent interactions between clients. Finally, we analyze the security of our scheme and evaluate its performance theoretically and experimentally. The results show that our scheme has strong privacy preservation and efficiency.

\end{abstract}
	
\begin{IEEEkeywords}
	strong privacy-preserving, DSSE, conjunctive keyword search.
  \end{IEEEkeywords}
}
\maketitle
\IEEEdisplaynontitleabstractindextext
\IEEEpeerreviewmaketitle
	
\IEEEraisesectionheading{\section{Introduction}\label{s_1}}
\IEEEPARstart{W}{ith} the advent of cloud computing, outsourcing the storage and processing of large data collection to third-party servers has gained significant popularity. However, data privacy is a primary concern as the third-party cloud server cannot be fully trusted. These privacy issues may cause reputational damage and/or location leakage \cite{Liu2017SocialNetwork, ZHANG2020406}. Generally, users do not desire that their sensitive data be disclosed to an untrusted server. A straightforward solution is to encrypt the data before uploading it to the cloud server. However, this limits the search operations, as encrypted data may not be directly searchable. 

To address these challenges, the searchable encryption (SE) technology is proposed to achieve keyword searching over ciphertext without revealing sensitive data and queries to the cloud server. Searching over encrypted data inevitably reduces search efficiency; hence, the SE schemes try to maintain an acceptable compromise between security and efficiency. In recent years, a series of schemes have been proposed \cite{cash2013highly, yang2017rspp, lai2018result, liu2020privacy}, which enable the data owner to encrypt the data and generate encrypted indices for searching. The cloud server can then retrieve the stored ciphertext based on the received search tokens.

In general, searchable encryption can be divided into two representative techniques: Symmetric searchable encryption (SSE), and Public-key searchable encryption. SSE has been extensively studied, and various schemes \cite{curtmola2011searchable, cash2014dynamic, li2019searchable, ghareh2018new} have been presented successively. However, the majority of schemes are severely limited to support single keyword search. In realistic scenarios, conjunctive keyword search is more appropriate, such as, e-health systems \cite{zhang2017searchable}, task recommendation systems \cite{shu2018privacy, zhang2021location}, and e-mail systems~\cite{wu2019vbtree}. Hence, we focus on conjunctive keyword search in SSE in this work.

A simple solution to support conjunctive keyword search has been described in \cite{JhoH13}. This solution returns the intersection of search results of each single keyword in the conjunction. Thus, this search method is inefficient due to repeated searches in the database. Furthermore, it will breach the searched data's secrecy. The cloud server should be allowed to acquire only the encrypted file identifiers corresponding to all keywords in the conjunctive search query $q=(w_1\wedge\cdots\wedge{w_n})$. To maintain this secrecy, Cash et al. \cite{cash2013highly} proposed the first sub-linear SSE protocol supporting conjunctive keyword search, named Oblivious Cross-Tags (OXT). The earlier SSE constructions \cite{Golle2004} scale linearly with the size of the entire database. Sub-linear means that the search complexity is independent of the total number of documents stored in the database. To reduce search overhead, the complexity of OXT is proportional to the number of matches involving the least frequent keyword as the $s$-term in the conjunction. Subsequently, many conjunctive SSE schemes based on OXT have been proposed \cite{wang2018vsse, kermanshahi2021multi, gan2022vsse}.

\begin{table*}[htbp] 
	\caption{Functionality Comparison of Existing Schemes and This Work.}
	\label{tab1}
	\centering
	\begin{tabular}{|p{2.1cm}<{\centering}|p{2cm}<{\centering}|p{2cm}<{\centering}|p{2cm}<{\centering}|p{2.3cm}<{\centering}|p{2cm}<{\centering}|}
		\hline  
		\makecell{} & \makecell{\textbf{Conjunctive} \\ \textbf{keyword}} & \makecell{\textbf{Forward} \\ \textbf{privacy}} & \makecell{\textbf{Backward} \\ \textbf{privacy}} & \makecell{\textbf{Keyword pair} \\ \textbf{result privacy}} & \makecell{\textbf{Non-}\\\textbf{interactive}}  \\
		\hline
		OXT\cite{cash2013highly} & \checkmark & \ding{55} & \ding{55}  & \ding{55} & \ding{55}\\
		
		\hline
		HXT\cite{lai2018result} & \checkmark & \ding{55} & \ding{55}  & \checkmark & \ding{55}  \\
		
		\hline
		ODXT\cite{patranabis2021forward} & \checkmark & \checkmark & \checkmark  & \ding{55} & \ding{55}\\
		
		\hline
		SE-EPOM\cite{liu2020privacy} & \checkmark & \ding{55} & \ding{55}  & \ding{55} & \checkmark   \\
		
		\hline
		Ours & \checkmark & \checkmark & \checkmark  & \checkmark & \checkmark   \\
		
		\hline  
	\end{tabular} 
\end{table*}

According to Lai et al.’s scheme \cite{lai2018result}, the OXT protocol suffers Keyword Pair Result Pattern (KPRP) leakage during the search phase. This means that for an $n$ keywords conjunctive search query $q$, the server can learn the encrypted file identifiers involving each pair of query keywords $(w_1, w_i), 2\leq{i}\leq{n}$, where the keyword $w_1$ is assumed to be the $s$-term. The file-injection attack \cite{zhang2016all} resorts to using KPRP leakage to expose all keywords in a conjunctive query with 100\% accuracy. To eliminate KPRP leakage, Lai et al. \cite{lai2018result} proposed the Hidden Cross-Tags (HXT) protocol. Unfortunately, all of the above SSE schemes are limited to static databases.

Adding and deleting files is generally required in real-world scenarios to dynamically update the database, which raises security concerns. For instance, the server might intentionally match the previous search tokens with the newly added file indexes to infer the keywords contained in the file and infer the keywords of the query from repeated search queries. Zhang et al. \cite{zhang2016all} presented file-injection attacks. Specifically, an untrusted server first crafts a set of files and tricks the client into encrypting them. After that, the server searches for the injected files using the prior tokens. According to the known keywords, the server can infer which keyword is involved in the token. File-injection attacks are effective to break the privacy of client queries and require only a small number of injected files. This attack was also described as leakage-abuse attacks in known-document and chosen-document attack setting by Cash et al. \cite{cash2015}. File-injection attacks and leakage-abuse attacks are based on the prior knowledge of an adversarial server. File-injection attacks require less prior knowledge and the server must inject files. Therefore, we consider the above attacks. To deal with the attacks, forward privacy and backward privacy were first introduced informally in by Stefanov et al. \cite{stefanov2013practical}, and later formalized by Bost et al. \cite{bost2016forward, bost2017forward}. \textbf{Forward privacy} ensures that the newly updated files cannot be linked with the previously executed search, which prevents the server from inferring the keywords. \textbf{Backward privacy} requires that deleted files cannot be retrieved in subsequent search queries. That is, search queries should not reveal information about files that have already been deleted from the database. 

Most of the published forward and backward private DSSE schemes only support single keyword search. For conjunctive queries, Zuo et al. \cite{zuo2020iacr} proposed FBDSSE-CQ scheme using the extended bitmap index and achieving both forward and backward privacy. However, the number of keywords in a conjunction query is limited. To date, for the existing works that is not limited by the number of search keywords, only the Oblivious Dynamic Cross Tags (ODXT) scheme \cite{patranabis2021forward} supports conjunctive keyword search guaranteeing forward and backward privacy. However, there is still the threat of KPRP leakage.

Besides, the previous OXT-based SSE protocols require the search user to ask the data owner during each search because the least frequent keyword in the conjunction should be known in advance. However, if a search user initiates distinct search queries multiple times, it should interact with the data owner for each query, and such frequent interactions cause additional communication overhead problems. In actuality, the data owner outsources the data to the server in the expectation of not having to do anything else but update the data and distribute secret keys. This work aims to propose a DSSE scheme that guarantees forward and backward privacy and eliminates the KPRP leakage to achieve strong privacy, and enable search users to obtain the least frequent keyword without interacting with the data owner.

\textbf{Our Contributions}. This work develops a solution for the privacy-preserving leakage problem of conjunctive DSSE and proposes an Efficient Strong Privacy-preserving Conjunctive Keyword Search (ESP-CKS) scheme. It ensures forward and backward privacy of DSSE and avoids keyword pair result pattern leakage. Table \ref{tab1} shows a comparison of our scheme with prior representative works for conjunctive keyword search. Our contributions are listed below in detail.

\begin{itemize}
	\item We propose the first strong privacy-preserving conjunctive keyword search scheme named ESP-CKS. It guarantees forward and backward privacy for dynamic databases. Besides, our scheme can prevent KPRP leakage in conjunctive search queries. Although the previous schemes support forward and backward privacy, they suffer from KPRP leakage.
	
	\item Our ESP-CKS scheme's search complexity is independent of the total number of documents in the database, but scales with the update frequency of the least frequent keyword in the conjunction. To address the issue of frequent interactions between the data owner and search users in previous work, we propose the first non-interactive least frequent keyword acquisition protocol.
	
	\item We conduct extensive experiments to evaluate the scheme's performance in terms of storage, computation, and communication. We prove that our scheme is more secure and efficient than the other two conjunctive keyword search approaches.
\end{itemize}

\section{Problem formulation}\label{s_2}
This section describes the system model and then introduces the threat model and design goals.

\subsection{System model}\label{s_2_1}
The system consists of three main entities as shown in Figure~\ref{Fig_1}: a data owner, a cloud server, and search users. 
\begin{itemize}
  \item Data Owner: The data owner is in charge of generating the system parameters and the secret key. The data owner encrypts documents via symmetric encryption and generates related indexes according to system parameters. Then it submits both indexes and encrypted documents to the cloud server. It can also add and delete documents from the encrypted database on the cloud server.
  
  \item Cloud Server: The cloud server holds unlimited computation and storage capacities. It stores documents and the indexes delivered by the data owner and handles search queries.
    
  \item Search Users: A search user is permitted to search the documents stored on the cloud server. It computes search tokens for desired keywords and sends them to the cloud server. After receiving the search results from the cloud server, it decrypts them to obtain the target documents' indexes.
\end{itemize}

We call the data owner and search users as clients, and the cloud servers as the server.

\subsection{Threat Model}\label{s_2_2}
In our schemes, the data owner and search users are assumed to be fully trusted. Specifically, they honestly follow the protocol and never expose its encryption keys to other entities except for permitted search users. The cloud server is considered honest but curious and always honestly executes the protocol but may attempt to infer information about data objects and the content of queries. 

\begin{figure}[htbp]
\centering
\includegraphics[width=3in]{./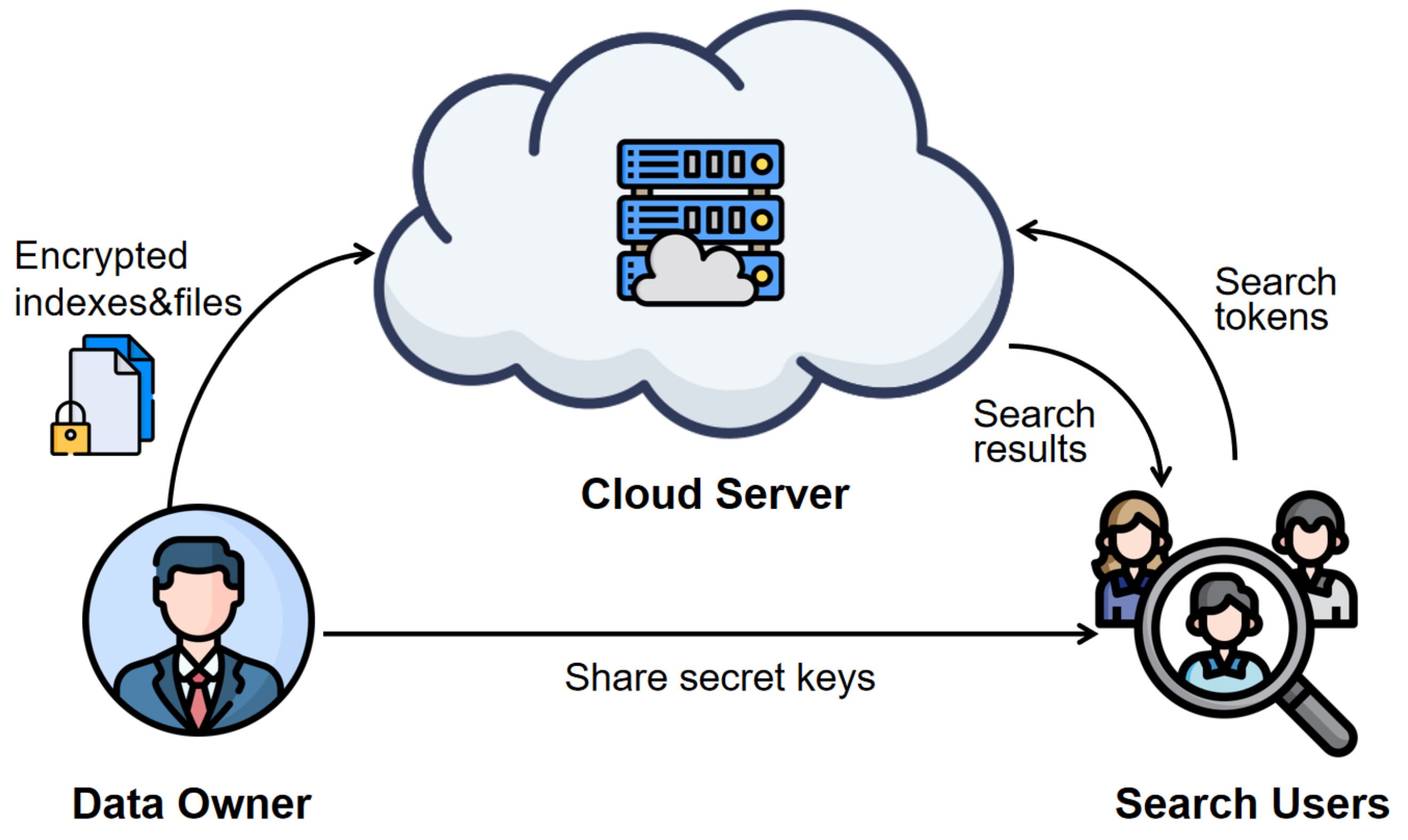}
\caption{System model}
\label{Fig_1}
\end{figure}

\subsection{Design Goals}\label{s_2_3}
We aim to design a strong privacy-preserving conjunctive keyword DSSE scheme, which enables the cloud server to provide storage and search services in a privacy-preserving manner. Our designed scheme should meet the following security requirements and features.

\begin{itemize}
    \item Forward Privacy: The cloud server should not learn whether the newly stored documents contain the previously searched keywords.

    \item Backward Privacy: The cloud server should not learn the deleted document when searching.
    
    \item Keyword Pair Result Privacy: The cloud server should not obtain the partial query result sets for two specific keywords, eliminating KPRP leakage.
    
    \item Non-interactive: The search user should not interact with the data owner for acquiring the keyword with the least frequency in each search query.
    
    \item Search Efficiency: The search complexity should be sub-linear. That means it is independent of the total number of stored documents but is correlated with the frequency of the least frequent keyword in the conjunction.
\end{itemize}

This paper considers the scheme with the above-mentioned privacy properties as a strong privacy-preserving scheme. 

\section{Preliminaries}\label{s_3}
In this section, we first present notations used in this work in Table \ref{notation}. Then we present the syntax of DSSE and a formal security definition. Finally, we give a brief description of Bloom filter and symmetric-key hidden vector encryption scheme, which are adopted to construct our scheme.
\begin{table}[htbp] 
	\centering
	\begin{small}
		\caption{Notations and Descriptions.}
		\label{notation}
		\centering
		\begin{tabular}{|c|l|} 
			\hline  
			\textbf{Notation} & \textbf{Meaning} \\
			\hline
			\hline
			$\lambda$ & security parameter\\
			\hline
			$w$ & a keyword\\
			\hline
			$op$ & operation in $\{add,del\}$ \\
			\hline
			$id_i$ & identifier of $i$-th file where $id_i\in\{0,1\}^\lambda$\\
			\hline
			$W_i$ & list of keywords contained in the file $id_i$\\
			\hline
			$M$ & number of files stored in $\mathsf{DB}$ \\
			\hline
			$\mathsf{DB}$ & database $\{op_i,id_i,W_i\}_{i=1}^{M}$\\ 
			\hline
			$\mathcal{W}$ & set of keywords $\cup_{i=1}^{M}W_i$\\
			\hline
			$|\mathcal{W}|$ & number of keywords stored in $\mathsf{DB}$ \\
			\hline 
			$q$ & conjunctive query $(w_1\wedge\cdots\wedge{w_n})$\\
			\hline
			$\mathcal{Q}$ & list $\{w_1,\cdots,w_n\}$ of keywords for $q$ \\
			\hline
			$\mathsf{DB}(w)$ & file identifiers containing $w$\\ 
			\hline
			$\mathsf{DB}(q)$ & file identifiers $\bigcap_{i=1}^{n}\mathsf{DB}(w_i)$ for $q$\\
			\hline
			$w_s$ & the keyword with the least update frequency \\
			\hline
			$cnt_w$ & the update frequency of $w$\\
			\hline
			$ecnt_w$ & encrypted value of $cnt_w$\\
			\hline
			$\Gamma$ & dictionary of keyword-frequency pairs $(w,cnt_w)$\\
			\hline
			$ftoken_{w}$ & frequency token of $w$\\
			\hline 
			$\mathcal{T}$ & list of frequency tokens for $q$\\
			\hline
			$\Delta$ & dictionary of token-value pairs $(ftoken_{w},ecnt_w)$\\
			\hline
			$BF$ & a Bloom filter\\
			\hline
			$m$ & size of $BF$\\
			\hline
			$m'$ & number of non-wildcard elements in $BF$\\
			\hline
			$N$ & number of triple $(op,id,w)$ stored in $\mathsf{DB}$\\
			\hline
			$[t]$ & set of integers $\{1,2,\cdots{t}\}$\\
			\hline 
			$poly(\lambda)$ & an unspecified polynomial in $\lambda$ \\
			\hline
			$negl(\lambda)$ & a negligible function in $\lambda$\\
			\hline
			$||$ & concatenation of strings\\
			\hline
			$x\stackrel{R}{\leftarrow}\mathcal{X}$ & uniformly sampling a random $x$ from $\mathcal{X}$\\
			\hline 
			\hline
		\end{tabular} 
	\end{small}
\end{table}

\subsection{Syntax of Dynamic Searchable Symmetric Encryption}\label{s_3_1}
A Dynamic Searchable Symmetric Encryption (DSSE) scheme $\Pi=(\textsc{Setup}, \textsc{Update}, \textsc{Search})$ consists of an algorithm \textsc{Setup} run by a client and two protocols \textsc{Update} and \textsc{Search} run by a client and a cloud server.

\begin{itemize}
  \item $\textsc{Setup}(\lambda,\mathsf{DB})$: Given the security parameter $\lambda$ and a database $\mathsf{DB}$, the client executes the algorithm to generate $(sk,\sigma,\mathsf{EDB})$, where $sk$ is a secret key, $\sigma$ is the client's state, and $\mathsf{EDB}$ is an empty encrypted database that is sent to the server.
  
  \item $\textsc{Update}(sk,\sigma,op,id,w;\mathsf{EDB})$: The client takes as input the secret key $sk$, the sate $\sigma$, an operation $op$ which can be $add$ or $del$, a file identifier $id$ and a keyword $w$. The server takes as input $\mathsf{EDB}$. After executing the protocol, the client updates its internal sate $\sigma'$ and the server updates encrypted database $\mathsf{EDB}'$.
  
  \item $\textsc{Search}(sk,\sigma,q;\mathsf{EDB})$: The client inputs the secret key $sk$, the state $\sigma$, and a search query $q$. The server takes as input $\mathsf{EDB}$. After executing the protocol, the client outputs $\mathsf{DB}(q)$ as the search result. 
\end{itemize}

\subsection{Security Definition of DSSE}\label{s_3_2}
The security of a DSSE scheme states that the server must learn as little as possible about the content of the database and queries. We expect that the adversary cannot learn anything beyond certain obvious leakages. The security of a DSSE scheme can be parametrized by a stateful leakage function $\mathcal{L}=(\mathcal{L}_{Setup},\mathcal{L}_{Update},\mathcal{L}_{Search})$ that expresses the information leaked to the adversary throughout each operation. We describe two probabilistic experiments in the real world and the ideal world as follows.

\begin{itemize}
  \item $\textbf{Real}^{\Pi}_{\mathcal{A}}(\lambda,Q)$: The adversary $\mathcal{A}$ chooses a database $\mathsf{DB}$ and a query list $\mathsf{q}$. The experiment runs $\textsc{Setup}(\lambda,\mathsf{DB})$ and returns $\mathsf{EDB}$ to $\mathcal{A}$. Then, for each $i\in{Q}$ ($Q=|\mathsf{q}|$), the experiment responds to the query by running $\textsc{Update}(sk,\sigma_i,\mathsf{q_i};\mathsf{EDB}_i)\rightarrow{(\sigma_{i+1},\mathsf{EDB}_{i+1})}$ or $\textsc{Search}(sk,\sigma_i,\mathsf{q_i};\mathsf{EDB}_i)\rightarrow{\mathsf{DB}(\mathsf{q_i})}$ depending on whether $\mathsf{q_i}$ is an update query or a search query. Eventually, $\mathcal{A}$ outputs a bit $b\in\{0,1\}$.
  
  \item $\textbf{Ideal}^{\Pi}_{\mathcal{A}}(\lambda,Q)$: The adversary $\mathcal{A}$ chooses a database $\mathsf{DB}$. Given the leakage function $\mathcal{L}_{Setup}$, the simulator $\mathcal{S}$ returns $\mathsf{EDB}$ to $\mathcal{A}$. Then, $\mathcal{A}$ adaptively chooses a query list $\mathsf{q}$. For each $i\in{Q}$, $\mathcal{S}$ answers the query $\mathsf{q_i}$. Eventually, the adversary $\mathcal{A}$ outputs a bit $b\in\{0,1\}$.
\end{itemize}

\begin{definition}
A DSSE scheme $\Pi$ with a collection of leakage function $\mathcal{L}$ is adaptively-secure if for any probabilistic polynomial-time adversary $\mathcal{A}$ issuing a maximum of $Q=poly(\lambda)$ queries, there exists a probabilistic polynomial-time simulator $\mathcal{S}$ such that 
\begin{eqnarray*}
|Pr[\textbf{Real}^{\Pi}_{\mathcal{A}}(\lambda,Q)=1]-Pr[\textbf{Ideal}^{\Pi}_{\mathcal{A}}(\lambda,Q)=1]|\leq{negl(\lambda)}.
\end{eqnarray*}
\end{definition}

\subsection{Bloom Filter}\label{s_3_3}
Bloom filter is a space-efficient probabilistic data structure used to represent a set $\mathcal{S}=\{s_1,s_2,...,s_N\}$ of $N$ elements. Its main functionality is to support fast set membership verification. A Bloom filter consists of a binary array of $m$-bits which is initially all 0. It is associated with $k$ independent hash functions $\{H_i\}_{1\leq{i}\leq{k}}$. Given each element $s\in\mathcal{S}$, the bits at positions $\{H_i(s)\}_{1\leq{i}\leq{k}}$ in the Bloom filter are set to 1. To test the existence of $s$, check whether all bits at positions $\{H_i(s)\}_{1\leq{i}\leq{k}}$ are equal to 1. If so, $s\in\mathcal{S}$ with high probability due to the false positive rate. Otherwise, $s\notin\mathcal{S}$ with the probability 1. The false positive means that $s\notin\mathcal{S}$ but membership test returns to 1. Suppose elements in $\mathcal{S}$ are hashed into the Bloom filter, the false positive rate $P_e$ is
\begin{eqnarray*}
P_e \leq (1-e^{-kN/m})^k.
\end{eqnarray*}
$P_e$ can be negligible by choosing optimal parameters $m$ and $k$. Given $N, P_e$, the optimal choice of $k$ is $k\approx{\log_2(1/P_e)}$, while the required $m\approx{1.44\cdot{\log_2(1/P_e)}}\cdot{N}$ \cite{broder2004network}.

\subsection{Symmetric-key Hidden Vector Encryption}\label{s_3_4}
Symmetric-key Hidden Vector Encryption (SHVE) \cite{lai2018result} is a lightweight predicate encryption scheme that supports comparison over encrypted data. Assume that $\Sigma=\{0,1\}$ is a finite set of attributes, and $\ast$ is a wildcard symbol not in $\Sigma$, $\Sigma_{\ast}=\Sigma\cup{\{\ast\}}$. For each index vector $\textbf{\rm{x}}=(x_1,\cdots,x_m)\in{\Sigma^m}$ and predicate vector $\textbf{\rm{v}}=(v_1,\cdots,v_m)\in{\Sigma^m_{\ast}}$, we have:
\begin{gather*}
P_{\textbf{\rm{v}}}^{\rm{SHVE}}(\textbf{\rm{x}})=\left\{
         \begin{array}{lc}
            1, & \forall~{1\leq{i}\leq{m}(v_i=x_i~or~v_i=\ast)} \\
            0, &   otherwise.
         \end{array}
\right.
\end{gather*}

The predicate $P_{\textbf{\rm{v}}}^{\rm{SHVE}}(\textbf{\rm{x}})=1$ means that the vector \textbf{x} matches \textbf{v} in all the locations that are non-wildcard characters. The details of the SHVE are as follows:

\begin{itemize}
    \item ${\rm{SHVE.Setup}}(1^\lambda)\rightarrow{(msk,\mathcal{M})}$: Given the security parameter $\lambda$, the algorithm outputs an uniformly sampled master secret key $msk\stackrel{R}{\leftarrow}\{0,1\}^{\lambda}$, and defines the payload message space $\mathcal{M}= \{``True"\}$.

    \item ${\rm{SHVE.Enc}}(msk,\mu= ``True", \textbf{x}\in{\Sigma^m})\rightarrow\textbf{c}$: The algorithm takes as input the master secret key $msk$, a message $\mu= ``True"$ and an index vector $\textbf{x}=(x_1,\cdots,x_m)\in{\Sigma^m}$. It returns the ciphertext
    \begin{eqnarray*}
    \textbf{c}=\{c_l\}_{l\in{[m]}},
    \end{eqnarray*}
    where for each $l\in{[m]}$, $c_l=F(msk,x_l||l)$.

    \item ${\rm{SHVE.KeyGen}}(msk,\textbf{v}\in{\Sigma^m_{\ast}})\rightarrow\textbf{s}$: The algorithm takes as input the master secret key $msk$ and a predicate vector $\textbf{v}=(v_1,\cdots,v_m)\in{\Sigma^m_{\ast}}$. Let set $S=\{l_j\in{[m]}|v_{l_j}\neq\ast\}$ be the set of all locations in \textbf{v} that do not contain the wildcard characters. The algorithm randomly samples to get $K_0\stackrel{R}{\leftarrow}\{0,1\}^{\lambda+\log\lambda}$, and calculates:
    \begin{gather*}
    d_0=\oplus_{j\in[|S|]}{(F(msk,v_{l_j}||l_j))}\oplus{K_0}, \text{and} \\
    d_1={\rm{Sym.Enc}}(K_0,0^{\lambda+\log\lambda}).
    \end{gather*}
    Here $F$ is a pseudorandom function, and Sym.Enc denotes a symmetric encryption algorithm. The algorithm finally returns the decryption key
    \begin{eqnarray*}
    \textbf{s}=(d_0,d_1,S).
    \end{eqnarray*}
 
    \item ${\rm{SHVE.Query}}(\textbf{s},\textbf{c})\rightarrow{\mu(\perp)}$: On input of the ciphertext \textbf{c} and the key \textbf{s}, the query algorithm parse $\textbf{s}=(d_0,d_1,S)$ and $\textbf{c}=(\{c_l\}_{l\in{[m]}})$, where $S=\{l_1,l_2,\cdots,l_{|S|}\}$. Then, it computes
    \begin{eqnarray*}
    {K_0}'=(\oplus_{j\in[|S|]}{c_{l_j}})\oplus{d_0}.
    \end{eqnarray*}
    Next the decryption algorithm calculates
    \begin{eqnarray*}
    \mu'={\rm{Sym.Dec}}({K_0}',d_1).
    \end{eqnarray*}
    If $\mu'=0^{\lambda+\log\lambda}$, it outputs $``True"$, otherwise outputs $\perp$.
\end{itemize}

\textbf{Correctness}: Given the ciphertext $\textbf{c}=(\{c_l\}_{l\in{[m]}})$ related to index vector $\textbf{\rm{x}}=(x_1,\cdots,x_m)$, the decryption key $\textbf{s}=(d_0,d_1,S)$ corresponding to predicate vector $\textbf{\rm{v}}=(v_1,\cdots,v_m)$, and the set of locations $S=\{l_1,l_2,\cdots,l_{|S|}\}$, we analyze the correctness of aforementioned scheme in the following two scenarios:
\begin{itemize}
    \item If $P_{\textbf{\rm{v}}}^{\rm{SHVE}}(\textbf{\rm{x}})=1$, $v_{l_j}=x_{l_j}$ holds for each $j\in[|S|]$. This means that we can calculate $c_{l_j}=F(msk,v_{l_j}||l_j)$ for each $j\in[|S|]$. Then we have
    \begin{eqnarray*}
    \begin{aligned}
    {K_0}'&=(\oplus_{j\in[|S|]}{c_{l_j}})\oplus{d_0}\\
      &=(\oplus_{j\in[|S|]}{c_{l_j}})\oplus{(\oplus_{j\in[|S|]}{(F(msk,v_{l_j}||l_j))})}\oplus{K_0}\\
      &=(\oplus_{j\in[|S|]}{F(msk,v_{l_j}||l_j)})\\&~~~~~~~~~~~\oplus{(\oplus_{j\in[|S|]}{(F(msk,v_{l_j}||l_j))})}\oplus{K_0}\\
      &=K_0,\text{and}
    \end{aligned}
    \end{eqnarray*}
    \begin{eqnarray*}
    \begin{aligned}
    \mu'&={\rm{Sym.Dec}}({K_0}',d_1)=0^{\lambda+\log\lambda}.
    \end{aligned}
    \end{eqnarray*}
    Finally, the query algorithm SHVE.Query returns $``True"$.
    
    \item If $P_{\textbf{\rm{v}}}^{\rm{SHVE}}(\textbf{\rm{x}})=0$, there exists some $j\in[|S|]$ such that $v_{l_j}\neq{x_{l_j}}$. This implies that $c_{l_j}\neq{F(msk,v_{l_j}||l_j)}$, and hence, ${K_0}'\neq{K}$, so that we have $\mu'\neq0^{\lambda+\log\lambda}$ and the algorithm returns the failure symbol $\perp$.
\end{itemize}

\section{Construction of the Proposed Scheme}\label{s_4}
In this section, we present the detailed construction of our ESP-CKS scheme. We first introduce our least frequent keyword acquisition protocol to achieve non-interactive acquisition of the $s$-term $w_s$ with the least update frequency in the conjunction. After that, we propose the algorithm for the setup, update and search phases, respectively.

\subsection{Least Frequent Keyword Acquisition Protocol}\label{s_4_1}
The existing conjunctive keyword SSE schemes require that a search user asks for the keyword with the least frequency of updates from the data owner in each search query. However, this produces redundant communication overhead. Ideally, after the update is complete, the data owner should not be required to perform additional interactions for each search. We propose the privacy-preserving Least Frequent Keyword Acquisition (LFKA) protocol to address this problem. The protocol guarantees that a search user can obtain the least frequent keyword without frequent interactions with the data owner. Meanwhile, the privacy of update frequency is protected.

Let $F_1:\{0,1\}^{\lambda}\times\{0,1\}^{\ast}\rightarrow\{0,1\}^{2\lambda}$ and $F_2:\{0,1\}^{\lambda}\times\{0,1\}^{\ast}\rightarrow\{0,1\}^{\lambda}$ be pseudorandom functions (PRFs). Our LFKA protocol $\Lambda=(KeyGen, FreqSetup, TokenGen, FreqFind, Compare)$ is constructed as follows.

\begin{itemize}
  \item $KeyGen(\lambda)\rightarrow{K}$: Given the security parameter $\lambda$, the data owner generates the random secret key $K\in\{0,1\}^{\lambda}$ for PRFs $F_1$, $F_2$.
  
  \item $FreqSetup(\Gamma,r)\rightarrow{\mathcal{C}}$: The data owner selects a random integer $r\in\mathbb{N}$. Then it takes $r$ and the dictionary $\Gamma$ as input. For each keyword $w\in\mathcal{W}$, the data owner computes the ciphertext $ecnt_w$ for the plaintext $cnt_w$ composing the set $\mathcal{C}$, then it sends $\mathcal{C}$ to the cloud server. 
  
  \item $TokenGen(\mathcal{Q},K,r)\rightarrow{\mathcal{T}}$: The search user inputs a query $\mathcal{Q}$, the secret key $K$ and the random integer $r$. It generates a frequency token list $\mathcal{T}$, then sends $\mathcal{T}$ to the cloud server.
  
  \item $FreqFind(\mathcal{T},\mathcal{C})\rightarrow{\Delta}$: The cloud server inputs the token list $\mathcal{T}$ and set $\mathcal{C}$. It matches each token in $\mathcal{T}$ with the elements in $\mathcal{C}$ to get a matching result set $\Delta$. 
  
  \item $Compare(\Delta,K,r)\rightarrow(w_s,cnt_{w_s})$:  Given the set $\Delta$, the secret key $K$ and the random integer $r$, the search user outputs the least frequent keyword in the conjunction with its update frequency.
 
\end{itemize}

In the stage $FreqSetup$, to reduce frequent interaction between the search user and the data owner, for each $w\in\mathcal{W}$, the client computes the ciphertext as
\begin{eqnarray*}
ecnt_w=F_1(K,r||w)+F_2(K,w||r)+cnt_w,
\end{eqnarray*}
where $F_1$ generates $2\lambda$-bits output, $F_2$ generates $\lambda$-bits output, and operation $+$ is a bitwise addition operation, satisfying the value of $(F_2(K,w||r)+cnt_w)$ does not exceed $2^{\lambda}$, which is convenient to find $ecnt_w$ later. Therefore, when retrieving the encrypted value $ecnt_w$ corresponding to $w$, it is obtained that only the first $\lambda$-bits need to be matched according to $F_1(K,r||w)$. In addition, a fresh random integer $r$ is generated in each update and owned by the client, and $ecnt_w$ for each keyword is updated, ensuring the privacy of the updated keywords.

When a client requests to search the least frequent keyword in $\mathcal{Q}$, it runs $TokenGen$ to generate the frequency tokens. It computes $ftoken_w=F_1(K,r||w)$ for each keyword $w\in\mathcal{Q}$, and then sends $\mathcal{T}=\{ftoken_{w_1},\cdots,ftoken_{w_n}\}$ to the cloud server.

In response to the client, the cloud server executes the algorithm $FreqFind$. Note that the server uses $ftoken_w$ to match and obtain the corresponding $ecnt_w$ in the same range according to the first $\lambda$-bits. Then the server sends the set $\Delta=\{(ftoken_{w_i},ecnt_{w_i})\}_{i\in[n]}$ to the client.

The client runs the algorithm $Compare$ to find the keyword with the least update frequency. Specifically, it obtains $cnt_w$ by decrypting $ecnt_w, w\in\mathcal{Q}$ 
\begin{eqnarray*}
cnt_w=ecnt_w-F_1(K,r||w)-F_2(K,w||r),
\end{eqnarray*}
where the secret key $K~\text{and}~r$ are received previously from the data owner. Finally, the client compares the frequencies of each desired keyword in list $\mathcal{Q}$, and returns the least frequent keyword $w_s$ and its update frequency $cnt_{w_s}$.

\subsection{Setup Phase}\label{s_4_2}
\noindent \textbf{Client:} The \textsc{Setup} algorithm (Algorithm \ref{alg:1}) inputs the security parameter $\lambda$ and returns the $sk, st, param, \mathsf{EDB}$. The client initializes two empty maps $Cnt$ and $TSet$, where $Cnt$ is the state parameter $\sigma$ mentioned in Section \ref{s_3_1} storing update frequencies of keywords, and $TSet$ stores address-value entries. Then it generates a $m$-bits Bloom filter $BF$ with $k$ hash functions $\{H_j\}_{1\leq{j}\leq{k}}$, which is set up for dynamic cross-tags $xtag$ introduced later. The client randomly chooses keys for PRFs and  $msk$. Then it executes $\rm{SHVE.Setup}$ to encrypt $BF$. $TSet$ and $xtagBF$ are components of the encrypted database $\mathsf{EDB}$ and sent to the server.

\begin{algorithm}
    \algsetup{linenosize=\footnotesize} 
    \footnotesize
    \caption{Setup}
    \label{alg:1}
    \begin{algorithmic}[1]
        \REQUIRE a security parameter $\lambda$
        \ENSURE $sk, st, param, \mathsf{EDB}$
        \\$\textbf{\underline{\emph{ Client: }}}$
        \STATE Initialize $Cnt$, $TSet$ to empty maps
        \STATE Initialize a Bloom filter $BF$ with element 0
        \STATE Select $k$ hash functions $\{H_i\}_{1\leq{i}\leq{k}}$ for $BF$
        \STATE Select $K_T$ for PRF $F$
        \STATE Select $K_X, K_Y, K_Z$ for PRF $F_p$
        \STATE Execute LFKA.$KeyGen(\lambda)$ to get key for PRF $F_1,F_2$
        
        \STATE Execute $\rm{SHVE.Setup}(1^{\lambda})$ to get secret key $msk$
        \STATE Compute $xtagBF={\rm{SHVE.Enc}}(msk,\mu=``True",BF)$
        \STATE Set $sk=(msk, K, K_T, K_X, K_Y, K_Z),~st=Cnt,~param=\{H_i\}_{1\leq{i}\leq{k}}$
        \STATE Set $\mathsf{EDB(1)} = TSet$,  $\mathsf{EDB(2)} = xtagBF$
        \STATE Send $\mathsf{EDB}=(\mathsf{EDB(1)}, \mathsf{EDB(2)})$ to the server
    \end{algorithmic}
\end{algorithm}

\subsection{Upate Phase}\label{s_4_3}
\noindent \textbf{Client:} In the \textsc{Update} algorithm (Algorithm \ref{alg:2}), the client takes as input the key set $sk$, the state parameter $st$, and the set $L$ consisting of $(op, w, id)$. The server takes database $\mathsf{EDB}$ as input. The algorithm updates a batch of documents in each update. For each triple $(op,id,w)$ in $L$, the corresponding entry $(addr, val,\alpha)$ and dynamic cross-tag $xtag$ are generated adopting pseudorandom functions and exponentiations, then the counter $Cnt[w] = Cnt[w] + 1 $. The entry $(addr, val,\alpha)$ will be stored in the form of address-value pair in the $TSet$. The address $addr$ is generated by a pseudorandom function $F$ for $w$ and the current update frequency $cnt_w$. The corresponding stored contents of the address $addr$ are $val$ and $\alpha$, where $val$ is the encrypted $(id, op)$ pair related to $w$ and $\alpha$ is the dynamic blinding factor that enables the server to calculate the $xtag$. Besides, the client calculates the $k$ positions related to $xtag$ in the Bloom filter $BF$, and sets the corresponding elements to 1. Then it executes SHVE.Enc algorithm to encrypt $BF$, and gets the ciphertext $xtagBF$, and send the entries mentioned above to the server.

Note that in each update, the client is required to select a new random number $r'\in\mathbb{N}$ (to distinguish the previous $r$ from the fresh $r$, we use $r'$ to represent the latter) and re-encrypt the update frequencies of all keywords via LFKA.$FreqSetup$. Then it sends the fresh set $\mathcal{C}$ to the cloud server. According to the changed entry, this approach prevents the cloud server from deducing which keyword is in the updated file. Besides, it avoids frequent interactions between the data owner and the search user. Even though the cloud server might notice that the number of elements in set $\mathcal{C}$ changes after each update, it cannot infer the plaintext of the underlying keywords without knowing the secret key $K$. 

\begin{algorithm}
    \algsetup{linenosize=\footnotesize} 
    \footnotesize
    \caption{Update}
    \label{alg:2}
    \begin{algorithmic}[1]
        \REQUIRE $sk, st, L; \mathsf{EDB}$
        \ENSURE $st, \mathsf{EDB}$ 
        \\$\textbf{\underline{\emph{ Client: }}}$
        \STATE Initialize a dictionary $\Gamma$
        \STATE Initialize $\mathcal{C},addrList,valLsit,\alpha{List}$ to empty sets
        \STATE Get $sk=(msk, K, K_T, K_X, K_Y, K_Z)$, $st=Cnt$
        
            \FOR{$i=1:L.size$}
                \STATE $(op,id,w)=L[i]$
                \IF{$Cnt[w]=NULL$}
                    \STATE Set $Cnt[w]=0$
                \ENDIF
                \STATE Set $Cnt[w]=Cnt[w]+1$
                \STATE Set $\Gamma[w]=Cnt[w]$
                \STATE Compute $addr=F(K_T,w||Cnt[w]||0)$
                \STATE Set $addrList=addrList\cup\{addr\}$
                \STATE Compute $val=(id||op)\oplus{F(K_T,w||Cnt[w]||1)}$
                \STATE Set $valList=valList\cup\{val\}$
                \STATE Compute $\alpha=F_p(K_Y,id||op)\cdot{F_p(K_Z,w||Cnt[w])^{-1}}$
                \STATE Set $\alpha{List}=\alpha{List}\cup\{\alpha\}$
                \STATE Compute $xtag=g^{F_p(K_X,w)\cdot{F_p(K_Y,id||op)}}$
                    \FOR{$i=1:k$}
                        \STATE Compute $hind(op,id,w)=H_i(xtag)$ 
                        \STATE Set $BF[hind(op,id,w)]=1$
                    \ENDFOR
            \ENDFOR
        \STATE Select a random number $r\in\mathbb{N}$
        \STATE Set $\mathcal{C}={\rm{LFKA}}.FreqSetup(\Gamma,r)$
        
        \STATE Compute $xtagBF={\rm{SHVE.Enc}}(msk,\mu=``True",BF)$
        \STATE Send $(\mathcal{C},addrList,valLsit,\alpha{List},xtagBF)$  to the server
        \\$\textbf{\underline{\emph{ Server: }}}$
        \STATE Get $TSet=\mathsf{EDB(1)}$
            \FOR{$i=1:addrList.size$}
                \STATE Set $TSet[addrList[i]]=(valLsit[i],\alpha{List}[i])$
            \ENDFOR
        \STATE Set $\mathsf{EDB}=(TSet,xtagBF)$
    \end{algorithmic}
\end{algorithm}

\noindent \textbf{Server.} After receiving the encrypted entries, the server updates dictionary $TSet$ and encrypted Bloom filter $xtagBF$ accordingly. 

We introduce the update counter $cnt$ to ensure forward privacy. $cnt_w$ is incremented when an updated file contains the keyword $w$. Similarly, the random integer $r$ changes in each update. Because $cnt_w$ and $r$ are not the same as those in the latest search query, it ensures there is no relation between the new update and the previous search. Moreover, since the server learns nothing about the secret key $K$ and $K_T$, an update operation hides the underlying operation $op$, the identifier $id$, and the keyword $w$.

\subsection{Search Phase}\label{s_4_4}
Let $q=(w_1\wedge\cdots\wedge{w_n})$ be a conjunctive search query issued by the client, then the \textsc{SEARCH} algorithm (Algorithm \ref{alg:3}) involves three rounds of interaction between the client and the server. The keyword with the least frequency in the conjunction is denoted as the $s$-term $w_s$, and other keywords $w_i(2\leq{i}\leq{n})$ as the $x$-term.

Round-1 allows the client to execute the LFKA protocol with the server. Then it obtain the $s$-term $w_s$ and its frequency $cnt_{w_s}$. 

In Round-2, the client reverses the positions of $w_s$ and $w_1$, resulting in $w_s$ becoming $w_1$. Subsequently, it generates all relevant addresses $saddr$s in the $TSet$ involving $w_1$ from the first to the $cnt_{w_1}^{th}$ update. Meanwhile, it generates an additional set of cross-tokens $\{xtoken_{i,j}\}_{i\in[2,n],j\in[cnt_{w_1}]}$ related to the $s$-term $w_1$ and the $x$-term for each $id_{j}$, where $id_j$ is the identifier of the file involving the $s$-term. After receiving the list of $saddr$, the server retrieves the corresponding $TSet$ entries and gets the encrypted values $sval$ of $(id,op)$ pairs and the dynamic blinding factors $\alpha$. The server computes the relevant cross-tag $xtag_{i,j}$ involving $w_i$ and $id_j$ via exponential operations of cross-tokens and blinding factors. Then it calculates the positions of $BF$ which are set to 1 for storing $xtag_{i,j}$, and sends them to the client. 

Round-3 allows the client to construct a Bloom filter $vBF_j$ for $id_j$ based on the received positions. That is, we assume that each triple ${(op,id_j,w_i)}(2\leq{i}\leq{n})$ already exists in the database, where $op$ is the operation executed to the $id_j$ involving the $s$-term. Then the client executes SHVE.KeyGen to encrypt $vBF_j$, and sends the ciphertext $sBF_j$ to the server. The server performs membership test by executing ${\rm{SHVE.Query}}(sBF_j, xtagBF)$ to measure whether $id_j$ involves all keywords in the conjunction. If it returns $ ``True"$, all $n-1$ $x$-terms are also in the file $id_j$, and the server sends the relevant $sval$s to the client. 

Finally, the client decrypts $sval$s to recover $\mathsf{DB}(w_1)$ along with corresponding operations. Then it includes the identifiers with add operation in the result set $IdRList$, and discards those with delete operation from $IdRList$. 

\begin{algorithm*}
    \algsetup{linenosize=\footnotesize} 
    \footnotesize
    \caption{Search}
    \label{alg:3}
    \begin{multicols}{2}
    \begin{algorithmic}[1]
        \REQUIRE $param, sk, r, \mathcal{Q}=\{w_1,\cdots,{w_n}\}; \mathsf{EDB}$
        \ENSURE $IdRList$
        \\$\textbf{\underline{\emph{ Client: }}}$
        \STATE Initialize $\mathcal{T}$ to an empty set
        \STATE Get $sk=(msk, K, K_T, K_X, K_Y, K_Z)$
        \STATE Set $\mathcal{T}={\rm{LFKA}}.TokenGen(\mathcal{Q},K,r)$
        \STATE Send $\mathcal{T}$ to the server
        \\$\textbf{\underline{\emph{ Server: }}}$
        
        \STATE Initialize $\Delta$ to empty list
        \STATE Set $\Delta={\rm{LFKA}}.FreqFind(\mathcal{T},\mathcal{C})$ 
       
        \STATE Send set $\Delta$ to the client
        \\$\textbf{\underline{\emph{ Client: }}}$
        
        \STATE Set $(w_s,cnt_{w_s})={\rm{LFKA}}.Compare(\Delta,K,r)$
        \STATE Replace $w_s$ to the front of the list $\mathcal{Q}$ as $w_1$

        \STATE Initialize $saddrList$ and $xtokenList[cnt_{w_1}]$to empty lists
            \FOR{$j=1:cnt_{w_1}$}
               
                \STATE Compute $saddr_j=F_p(K_T,w_1||j||0)$
                \STATE Set $addrList=saddrList\cup\{saddr_j\}$
                
                \FOR{$i=2:n$}
                    \STATE Compute $xtoken_{i,j}=g^{F_p(K_Z,w_1||j)\cdot{F_p(K_X,w_i)}}$
                    \STATE $xtokenLsit[j]=xtokenList[j]\cup\{xtoken_{i,j}\}$
                \ENDFOR
                \STATE Randomly shuffle the tuple-entries of $xtokenList[j]$
            \ENDFOR
        \STATE Send $saddrList,xtokenList$ to the server
        \\$\textbf{\underline{\emph{ Server: }}}$
        \STATE Get $(TSet,xtagBF)=\mathsf{EDB}$
        \STATE Initialize $vBFind[xtokenList.size]$ to empty lists
            \FOR{$j=1:saddrList.size$}
                \STATE Set $(sval_j,\alpha_j)=TSet[saddrList[j]]$
                
                \FOR{$i=2:n$}
                    \STATE Set $xtoken_{i,j}=xtokenList[j][i-1]$
                    \STATE Compute $xtag_{i,j}=xtoken_{i,j}^{\alpha_j}$
                    \FOR{$t=1:k$}
                        \STATE Compute $hind=H_t(xtag_{i,j})$
                        \STATE Set $vBFind[j]=vBFind[j]\cup{\{hind\}}$
                    \ENDFOR
                \ENDFOR
            \ENDFOR
        \STATE Send $vBFind[1],…,vBFind[j]$ to the client
        \\$\textbf{\underline{\emph{ Client: }}}$
        \STATE Initialize $cnt_{w_1}$ Bloom filters $vBF$ with element $\ast$
        \STATE Initialize $sBFList$ to an empty list
            \FOR{$j=1:cnt_{w_1}$}
                \FOR{$c=1:vBFind[j].size$}
                    \STATE Set $vBF[j][vBFind[c]]=1$
                \ENDFOR
                \STATE Compute $sBF_j={\rm{SHVE.KeyGen}}(msk,vBF[j])$
                \STATE Set $sBFList= sBFList\cup\{sBF_j\}$
            \ENDFOR
        \STATE Send $sBFList$ to the server
        \\$\textbf{\underline{\emph{ Server: }}}$
        \STATE Initialize $sRList$ to an empty list
       
            \FOR{$j=1:sBFList.size$}
                \STATE Compute $res_j={\rm{SHVE.Query}}(sBFList[j],xtagBF)$
                \IF{$res_j=``True"$}
                    \STATE Add $(j,sval_j)$ into $sRList$
                \ENDIF
            \ENDFOR
            \STATE Send $sRList$ to the client
        \\$\textbf{\underline{\emph{ Client:Final}}}$
        \STATE Initialize $IdRList$ to an empty list
            \FOR{$i=1:sRList.size$}
                \STATE Set $(j,sval_j)=sRList[i]$
                \STATE Compute  $(id_j,op_j)=sval_j\oplus{F(K_T,w_1||j||1)}$
                \IF{$op_j==add$}
                    \STATE Set $IdRList=IdRList\cup\{id_j\}$
                \ELSIF{$op_j==del$}
                    \STATE Set $IdRList=IdRList\backslash\{id_j\}$
                \ENDIF
            \ENDFOR
    \end{algorithmic}
	\end{multicols}
\end{algorithm*}

Recall that the data owner encrypts and uploads the frequency of updates to the server in the latest update. Therefore, the search user can ask the server for the least frequent keyword, which eliminates frequent interactions between clients in searches, and handing over the corresponding task to the server with unlimited computation and storage capacities. 

We now elaborate the correctness of oblivious conjunctive search. Over any number of update operations, the dynamic cross-tag can evaluate the presence or absence of any identifier-keyword pair $(id,w)$ in a dynamic dataset. More specifically, for an update triple $(op,id_j,w_i)$, there exists the cross-tag
\begin{eqnarray*}
xtag_{op,i,j}=g^{F_p(K_X,w_i)\cdot{F_p(K_Y,id_j||op)}},
\end{eqnarray*}
where $op\in\{add,del\}$. The $xtag$ value is stored in $BF$. We use a Bloom filter containing $\{xtag_{i,j}\}_{i\in[2,n]}$ to assume that file $id_j$ involves all $n-1$ $x$-terms. By performing SHVE protocol, the server checks the membership of $id_j$ in ciphertext without the KPRP leakage on which $w_i$ are in $id_j$.

To make the server obliviously compute the cross-tag, the client as a search user generates and sends an additional set of cross-tokens $\{xtoken_{i,j}\}_{i\in[2,n],j\in[cnt_{w_1}]}$ to the server which involve $s$-term $w_1$ and $n-1$ $x$-terms. Then, we can obtain
\begin{gather*}
xtoken_{i,j}=g^{F_p(K_X,w_i)\cdot{F_p(K_Z,w_1||j)}}.
\end{gather*}

Since the $TSet$ address corresponding to the
$j^{th}$ update operation involving $w_1$ stores an additional pre-computed blinding factor $\alpha$, the server can calculate $xtag_{op,i,j}$ using blinded exponentiations as
\begin{gather*}
xtag_{op,i,j}=g^{F_p(K_X,w_i)\cdot{F_p(K_Y,id_j||op)}}=(xtoken_{i,j})^{\alpha}.
\end{gather*}

The aforementioned computation process shows that without learning underlying update triple $(op,id,w)$, the server can obliviously compute the relevant cross-tag. Next, if all elements in $\{xtag_{op,i,j}\}_{i\in[2,n]}$ exist in $xtagBF$, the file $id_j$ involving $w_1$ also contains other $n-1$ keywords. 

\section{Security Analysis}\label{s_5}
In this section, we evaluate the security of our scheme. Firstly, we formally describe the leakage functions for ESP-CKS. Subsequently, we demonstrate its forward, backward, and keyword pair result privacy.

\subsection{Leakage Functions}\label{s_5_1}
We aim to guarantee that the DSSE scheme reveals as little information as possible, ensuring that it achieves a higher level of security. The leakage functions capture leakage. Similar to \cite{patranabis2021forward}, we formally define the leakage functions as
\begin{eqnarray*}
\mathcal{L}=(\mathcal{L}_{Setup},\mathcal{L}_{Update},\mathcal{L}_{Search}),
\end{eqnarray*}
where $\mathcal{L}_{Setup}= \perp$, $\mathcal{L}_{Update}(op,id,w)=\perp$, $\mathcal{L}_{Search(q)}=(\mathsf{TimeDB}(q),\mathsf{Upd}(q))$.

Let $\mathcal{O}$ be a list containing the following entries:
\begin{enumerate}
    \item $(t, w)$: search query for keyword $w$ at timestamp $t$;
    \item $(t, op, id, w)$: update query for $(op, id, w)$ at timestamp $t$, where $op\in\{add,del\}$.
\end{enumerate}

For any conjunctive query $q=(w_1\wedge\cdots\wedge{w_n})$, we define $\mathsf{TimeDB}(q)$ as a function that returns file identifiers and insertion timestamps. Files corresponding to the identifiers contain all the keywords involved in the query $q$ and have not yet been deleted from the database. Suppose the keyword $w_1$ is the $s$-term, namely aforementioned least frequent term $w_s$. Formally, it is expressed as
\begin{eqnarray*}
\mathsf{TimeDB}(q)=\{(\{t_i\}_{i\in[n]}, id)\,|\,(t_i,add,id,w_i)\in\mathcal{O} \\ and\quad\forall t^{\prime}: (t^{\prime},del,id,w_i)\notin\mathcal{O}\}.
\end{eqnarray*}

We define $\mathsf{Upd}(q)$ as a function that returns the timestamps of all update operations on the $s$-term $w_1$. Formally, it is expressed as
\begin{eqnarray*}
\mathsf{Upd}(q)=\{t\,|\,\exists(op,id):(t,op,id,w_1)\in\mathcal{O} \}.
\end{eqnarray*}

For simplicity, we assume that no Bloom filter false positives occur in our protocol. We have the following theorem for the security of our scheme.

\begin{theorem}\label{thm1}
Our ESP-CKS is adaptively-secure with respect to a leakage function $\mathcal{L}$ defined as before, assuming that $F, F_1, F_2, F_p$ are secure pseudorandom functions, the decisional Diffie-Hellman assumption holds over $\mathbb{G}$, and SHVE is a selectively simulation-secure protocol.
\end{theorem}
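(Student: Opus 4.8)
The plan is to give a simulation-based proof following the real-versus-ideal paradigm of Definition~1: I will exhibit a probabilistic polynomial-time simulator $\mathcal{S}$ that, given only the outputs of $\mathcal{L}=(\mathcal{L}_{Setup},\mathcal{L}_{Update},\mathcal{L}_{Search})$, reproduces the adversary's entire view (the initial $\mathsf{EDB}$ together with every \textsc{Update} and \textsc{Search} transcript), and then argue $\textbf{Real}^{\Pi}_{\mathcal{A}}\approx_c\textbf{Ideal}^{\Pi}_{\mathcal{A}}$ through a sequence of hybrid games $G_0,\dots,G_\ell$, where $G_0$ is the real experiment and $G_\ell$ is the ideal one. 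Each consecutive pair of games will differ in exactly one primitive, so that the corresponding hop is justified by a single assumption from the hypothesis and the total advantage is the sum of the individual gaps.

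First I would strip away the pseudorandomness. In successive hybrids I replace $F$ (keyed by $K_T$), the LFKA functions $F_1,F_2$ (keyed by $K$), and $F_p$ (keyed by $K_X,K_Y,K_Z$) with lazily-sampled truly random functions; each hop is bounded by the corresponding PRF advantage. After this, every address $addr=F(K_T,w\|cnt_w\|0)$, every masked value $val=(id\|op)\oplus F(K_T,w\|cnt_w\|1)$, every frequency ciphertext $ecnt_w$, and every exponent value becomes an independent uniform string, so that $\mathcal{L}_{Update}=\perp$ can be honoured: the simulator answers each update with freshly sampled random $addrList$, $valList$, $\alpha List$, a re-randomised $\mathcal{C}$, and a placeholder $xtagBF$, learning nothing about $(op,id,w)$. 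Forward privacy falls out here, since the freshness of $cnt_w$ and $r$ makes these transcripts statistically independent of any past search.

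Next I would randomise the group layer using the DDH assumption over $\mathbb{G}$. Once the exponents are uniform, the cross-tokens $xtoken_{i,j}=g^{F_p(K_X,w_i)\cdot F_p(K_Z,w_1\|j)}$ are replaced by uniform elements of $\mathbb{G}$ via a hybrid DDH reduction. The subtlety is that the blinding factor $\alpha=F_p(K_Y,id\|op)\cdot F_p(K_Z,w_1\|j)^{-1}$ is stored in $TSet$ in the clear and the server recomputes $xtag_{i,j}=xtoken_{i,j}^{\alpha}$, so the reduction must embed the DDH challenge so that the publicly derivable relation $xtag_{i,j}=(xtoken_{i,j})^{\alpha}$ is preserved while every tag not constrained by an actual membership remains pseudorandom. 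Finally I invoke the selective simulation security of SHVE to replace the pair $(xtagBF,sBF_j)$ and the query outcome $res_j$: the SHVE simulator is fed, for each $j\in[cnt_{w_1}]$, only the single bit ``does file $id_j$ contain all $x$-terms,'' so that the server learns the conjunction verdict but never the per-keyword Bloom-filter positions --- this is precisely the step that eliminates KPRP leakage.

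The heart of the argument --- and the step I expect to be hardest --- is reconciling the empty update leakage with the consistency demanded at search time. Because updates reveal nothing, the simulator commits to random addresses and masks without knowing which keyword they belong to; yet when $q$ is queried, the Round-2 addresses $saddr_j$ and the revealed $sval_j$ must line up with exactly those earlier updates. I resolve this with deferred programming driven by the search leakage: $\mathsf{Upd}(q)$ supplies $cnt_{w_1}$ and the timestamps of all $w_1$-updates, letting $\mathcal{S}$ retrieve the random addresses it emitted at those timestamps and reuse them as $saddr_j$, while $\mathsf{TimeDB}(q)$ supplies the surviving matching identifiers and their timestamps, which $\mathcal{S}$ programs into the recovered $(id_j,op_j)$ and uses to set the SHVE verdict bits. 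I must check that this programming is always collision-free (each random-function point is assigned at most once, incurring only a $negl(\lambda)$ statistical defect) and that it respects backward privacy, so that an identifier added and later deleted never surfaces in the recovered $IdRList$. Assembling these bounds, the total distinguishing advantage is at most the sum of the PRF advantages, the DDH advantage, the SHVE simulation gap, and the collision terms, each negligible, which yields the theorem.
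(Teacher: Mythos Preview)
Your proposal is correct and follows essentially the same route as the paper: a hybrid-game sequence that first swaps $F,F_1,F_2,F_p$ for truly random functions, then randomises the $\alpha$/$xtoken$/$xtag$ group elements under DDH, and finally builds a simulator that programs addresses and values via the timestamps supplied by $\mathsf{Upd}(q)$ and $\mathsf{TimeDB}(q)$. The paper's intermediate Games~$2$--$3$ (rewriting $xtoken$ through arrays $A,B$ and sampling $\alpha$ uniformly) are exactly the bookkeeping you allude to when you say the DDH reduction must preserve the public relation $xtag_{i,j}=(xtoken_{i,j})^{\alpha}$, and its Game~$5$ is your ``deferred programming'' step.

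One point worth flagging: you explicitly invoke the selective simulation security of SHVE to replace $(xtagBF,\{sBF_j\})$ by simulated objects depending only on the per-$j$ match bit, whereas the paper's appendix never introduces a game hop for SHVE at all, even though SHVE appears in the theorem's hypotheses. In that respect your outline is actually more complete than the paper's own proof; the paper simply lets the simulator ``generate the rest of the variables as the challenger does,'' which quietly presupposes the SHVE property you make explicit.
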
 

\begin{proof}
The detailed proof is given in Appendix.
\end{proof}

\subsection{Forward Privacy}\label{s_5_2}
According to the forward privacy definition introduced in \cite{bost2017forward}, the dynamic conjunctive SSE is adaptively forward privacy iff the update leakage function $\mathcal{L}_{Update}$ can be written as
\begin{eqnarray*}
\mathcal{L}_{Update}(op,id,w)=\mathcal{L}^{\prime}(op,id),
\end{eqnarray*}
where $\mathcal{L}^{\prime}$ is a stateless function. Note that forward privacy guarantees that the underlying keyword $w$ is hidden during update phase. Whereas, in our scheme $\mathcal{L}_{Update}(op,id,w)=\perp$. That is, an update operation hides the underlying keyword $w$, along with the identifier $id$ and the operation $op$. A natural corollary of Theorem \ref{thm1} is as follows.

\begin{corollary}[Forward Privacy of ESP-CKS]
Our scheme is adaptively forward private if $F, F_1, F_2, F_p$ are secure pseudorandom functions, the decisional Diffie-Hellman assumption holds over $\mathbb{G}$, and SHVE is a selectively simulation-secure protocol.
\end{corollary}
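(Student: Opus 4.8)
The plan is to obtain the corollary as an immediate consequence of Theorem~\ref{thm1} together with the syntactic characterization of forward privacy recalled above from \cite{bost2017forward}. First I would invoke Theorem~\ref{thm1}, which under the stated hypotheses---security of the PRFs $F,F_1,F_2,F_p$, the DDH assumption over $\mathbb{G}$, and selective simulation-security of SHVE---guarantees the existence of a PPT simulator $\mathcal{S}$ whose interaction with $\mathcal{A}$ is indistinguishable from the real experiment while consuming only the leakage family $\mathcal{L}=(\mathcal{L}_{Setup},\mathcal{L}_{Update},\mathcal{L}_{Search})$. Since the corollary asserts forward privacy under exactly these hypotheses, no new assumption is introduced.

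The key step is then purely to inspect the update leakage. In ESP-CKS we have $\mathcal{L}_{Update}(op,id,w)=\perp$, and I would cast this in the form demanded by the definition by writing $\mathcal{L}_{Update}(op,id,w)=\mathcal{L}'(op,id)$, where $\mathcal{L}'$ is the stateless function returning $\perp$ on every input. Because $\mathcal{L}'$ is in fact independent even of $(op,id)$, it is a fortiori independent of the updated keyword $w$, which is precisely the condition under which \cite{bost2017forward} declares an adaptively-secure DSSE to be adaptively forward private. Combining this observation with the simulator from Theorem~\ref{thm1} closes the argument.

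The substantive difficulty does not live in the corollary but is already discharged inside Theorem~\ref{thm1}: the hard part is exhibiting a simulator that reproduces each update transcript---the $TSet$ entries $(addr,val,\alpha)$ and the refreshed encrypted Bloom filter $xtagBF$---without ever seeing $w$, which is what the per-keyword counter $cnt_w$, the freshly sampled random integer $r$, and the pseudorandom values $F(K_T,w||cnt_w||0)$ and $F(K_T,w||cnt_w||1)$ are engineered to make possible. At the level of the corollary I would only need to verify that the two quantities re-sent on every update, namely the fresh $r$ and the re-encrypted frequency set $\mathcal{C}$ produced by the $FreqSetup$ stage of the LFKA protocol, carry no keyword-dependent information, so that the asserted identity $\mathcal{L}_{Update}=\perp$ is genuinely accurate; granting this, reading the update leakage directly off $\mathcal{S}$ completes the proof with no further hybrid or reduction.
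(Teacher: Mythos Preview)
Your proposal is correct and matches the paper's own treatment: the paper presents this corollary as an immediate consequence of Theorem~\ref{thm1} together with the observation that $\mathcal{L}_{Update}(op,id,w)=\perp$ trivially has the required form $\mathcal{L}'(op,id)$, with no further argument given. Your additional remark about checking that the refreshed $r$ and re-encrypted frequency set $\mathcal{C}$ carry no keyword-dependent information is a reasonable sanity check, but the paper simply absorbs this into the assertion $\mathcal{L}_{Update}=\perp$ and the proof of Theorem~\ref{thm1}.
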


\subsection{Backward Privacy}\label{s_5_3}
Backward privacy is formally defined in  \cite{bost2017forward} and is ordered from the most to least secure as Type-I, Type-II, Type-III. According to this, a DSSE scheme supporting single keyword search is adaptively backward privacy iff the update leakage function $\mathcal{L}_{Update}$ and the search leakage function $\mathcal{L}_{Search(w)}$ can be written as
\begin{gather*}
\mathcal{L}_{Update}(op,id,w)=\mathcal{L}^{\prime\prime}(op,id), \text{and}   \\
\mathcal{L}_{Search(w)}=\mathcal{L}^{\prime\prime\prime}(\mathsf{TimeDB}(w),\mathsf{Upd}(w)),
\end{gather*}
where $\mathcal{L}^{\prime\prime}$ and $\mathcal{L}^{\prime\prime\prime}$ are stateless functions.

For a DSSE scheme supporting conjunctive keyword search, a natural generalization of the aforementioned leakage function is defined as
\begin{gather*}
\mathcal{L}_{Update}(op,id,w)=\perp, \text{and}   \\
\mathcal{L}_{Search(q)}=(\mathsf{TimeDB}(q),\mathsf{Upd}(q)),
\end{gather*}
where $q$ is a conjunctive search query. The leakage meets the Type-II backward privacy. A natural corollary of Theorem \ref{thm1} is as follows.

\begin{corollary}[Backward Privacy of ESP-CKS] 
Our scheme is Type-II adaptively backward private if $F, F_1, F_2, F_p$ are secure pseudorandom functions, the decisional Diffie-Hellman assumption holds over $\mathbb{G}$, and SHVE is a selectively simulation-secure protocol.
\end{corollary}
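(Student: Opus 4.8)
The plan is to work in the real-versus-ideal paradigm of Definition~1 and exhibit a PPT simulator $\mathcal{S}$ that, given only $\mathcal{L}_{Setup}=\perp$, $\mathcal{L}_{Update}=\perp$, and $\mathcal{L}_{Search}(q)=(\mathsf{TimeDB}(q),\mathsf{Upd}(q))$ (together with the inherent message lengths), reproduces the entire transcript --- $\mathsf{EDB}$ and every update and search flow --- so that no PPT $\mathcal{A}$ distinguishes it from the real one with advantage better than $negl(\lambda)$. Rather than write $\mathcal{S}$ at once, I would pass from $\textbf{Real}^{\Pi}_{\mathcal{A}}$ to $\textbf{Ideal}^{\Pi}_{\mathcal{A}}$ through a chain of hybrids $\mathbf{G}_0,\dots,\mathbf{G}_t$, bounding each gap $|\Pr[\mathbf{G}_k=1]-\Pr[\mathbf{G}_{k+1}=1]|$ by the advantage against a single primitive and summing.

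First I would idealize the pseudorandom functions one family at a time. Replacing $F$ by a random function makes each $addr$ and each $val=(id\|op)\oplus F(K_T,\cdot)$ uniform and independent of its plaintext (a one-time pad), so $TSet$ entries leak nothing; consistency across repeated evaluations is kept with lazily sampled tables, and the step is justified by a direct PRF distinguisher. I would do the same for $F_1,F_2$: once random, every $ftoken_w=F_1(K,r\|w)$ and every $ecnt_w=F_1(K,r\|w)+F_2(K,w\|r)+cnt_w$ is uniform, so $\mathcal{C}$ and $\Delta$ hide $cnt_w$, and because a fresh $r$ is drawn at every update the tokens are unlinkable across epochs and $\mathcal{C}$ may be resampled from scratch each time. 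Finally I would idealize $F_p$, which feeds the group step.

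The group elements are the delicate layer. With $F_p$ random, the exponents $F_p(K_X,w)$, $F_p(K_Y,id\|op)$, $F_p(K_Z,w\|cnt)$ are independent and uniform, and I would invoke the decisional Diffie-Hellman assumption over $\mathbb{G}$ --- hybridizing over the distinct keyword slots so each step is a single DDH tuple --- to replace every $xtag=g^{F_p(K_X,w)\cdot F_p(K_Y,id\|op)}$ and every $xtoken_{i,j}=g^{F_p(K_X,w_i)\cdot F_p(K_Z,w_1\|j)}$ by uniform group elements, retaining only the relation $xtag_{op,i,j}=(xtoken_{i,j})^{\alpha_j}$ the server must verify. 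Now $\mathcal{S}$ simulates all updates obliviously (random addresses and values, a random $\alpha$, random inserted Bloom-filter positions). When a search on $q$ arrives, it reads $\mathsf{Upd}(q)$ to recover which timestamps touched the $s$-term and programs the random tables so the emitted $saddr_j$ point back to the matching stored entries, then reads $\mathsf{TimeDB}(q)$ to learn which $id_j$ satisfy the full conjunction and calls the SHVE simulator --- available by selective simulation security --- to output $xtagBF$ and each $sBF_j$ consistent with exactly those membership bits, since SHVE.Query reveals nothing past the predicate.

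The hard part will be the adaptive programming at the seam between oblivious updates and leakage-revealing searches: because $\mathcal{L}_{Update}=\perp$, $\mathcal{S}$ must commit to $addr$, $val$, $\alpha$, and the inserted positions with no knowledge of $(op,id,w)$, yet later reconcile them, at an adaptively chosen search, with the concrete matches named by $\mathsf{TimeDB}(q)$ and $\mathsf{Upd}(q)$; showing that the lazily programmed $F$, $F_p$ tables and the DDH-randomized group elements can always be made mutually consistent --- never forcing two conflicting assignments --- requires a collision/abort analysis establishing that programming fails only with negligible probability. A second subtlety is lifting SHVE's selective security to the adaptive setting: since $xtagBF$ is re-encrypted afresh on each update, I would apply the SHVE simulator per epoch with the index vector (the current $BF$) fixed before any $sBF_j$ of that epoch is requested, hybridizing across epochs; the assumed absence of Bloom-filter false positives then ensures the simulated predicate bits match true conjunction membership, closing the gap to $\textbf{Ideal}^{\Pi}_{\mathcal{A}}$.
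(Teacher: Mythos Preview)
Your proposal is correct, but it does considerably more work than the paper does at this point. In the paper, Corollary~2 is not given its own argument: it is stated as an immediate consequence of Theorem~1, which already establishes adaptive security with respect to the leakage $\mathcal{L}_{Update}=\perp$ and $\mathcal{L}_{Search}(q)=(\mathsf{TimeDB}(q),\mathsf{Upd}(q))$; the corollary simply observes that this leakage profile matches the Type-II backward-privacy template. What you have written is, in effect, a direct reproof of Theorem~1. Your hybrid chain --- idealize the PRFs, then use DDH to randomize the $xtag$/$xtoken$ group elements, then build a simulator programming addresses from $\mathsf{Upd}(q)$ and matches from $\mathsf{TimeDB}(q)$ --- tracks the paper's Game$_0$--Game$_6$ sequence in the appendix quite closely (their Game$_1$ is your PRF step, their Game$_4$ is your DDH step, their Game$_5$--Game$_6$ is your simulator with timestamp-indexed programming). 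Two differences are worth noting: you explicitly invoke the SHVE simulator and discuss lifting selective to adaptive security per epoch, which the paper's appendix proof does not spell out even though SHVE security is listed among the hypotheses; and you flag the adaptive-programming collision analysis as the crux, whereas the paper handles the analogous step (their Game$_5$) with a brief monotone-counter argument. Your route is thus the same in substance but more explicit about the SHVE layer and the consistency bookkeeping.
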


\subsection{Keyword Pair Result Privacy}\label{s_5_4}
We compare ESP-CKS with ODXT scheme \cite{patranabis2021forward} to prove that our scheme guarantees keyword pair result privacy; that is, it prevents the KPRP leakage. We consider the following example to illustrate the privacy improvement of the proposed scheme. 

Assuming that the database in the server stores five encrypted files, whose plaintext of indexes are $\{id_i\}_{1\leq{i}\leq{5}}$. The set of keywords $\mathcal{W}$ contains six keywords expressed as $\{w_i\}_{1\leq{i}\leq{6}}$. Each file involves different keywords, as shown in Table \ref{kprp_db}.

\begin{table}[htbp]
	\centering
	\begin{small}
		\caption{\label{kprp_db} Document Identifier’s and Keywords Contained.}
		\centering
		\begin{tabular}{|c|c|} 	  
			\hline
			\textbf{identifier} & \textbf{keywords} \\
			\hline
			\hline
			1 & $w_1,w_2,w_3,w_4,w_6$\\
			2 & $w_2,w_3,w_4,w_5$\\
			3 & $w_1,w_2,w_4,w_5,w_6$\\
			4 & $w_2,w_3,w_6$\\
			5 & $w_1,w_3,w_4$\\
			\hline
		\end{tabular} 
	\end{small}
\end{table}

Suppose that the client issues the search query $q=(w_1\wedge{w_2}\wedge{w_3})$. Note that keyword $w_1$ is the least frequent term. The set of documents involving $w_1$ is $\mathsf{DB}(w_1)=\{id_1,id_3,id_5\}$. Then the server determines whether the files contain $w_2$ and $w_3$. In ESP-CKS, the server only learns the result $\mathsf{DB}(q)=\{id_1\}$. However, in ODXT, the server can learn the partial query result that $\mathsf{DB}(w_1)\cap\mathsf{DB}(w_2)=\{id_1,id_3\}$ and $\mathsf{DB}(w_1)\cap\mathsf{DB}(w_3)=\{id_1,id_5\}$. Finally, the server sends the encrypted data of the intersection of these two sets (namely $\{id_1\}$) to the client. Although the partial query result called KPRP leakage is under the ciphertext, \cite{zhang2016all} shows that file-injection attacks can reveal all keywords of a conjunctive query with 100\% accuracy via utilizing KPRP leakage. Our ESP-CKS scheme eliminates the aforementioned KPRP leakage and improves privacy.

\section{Performance Analysis}\label{s_6}
In this section, we analyze the performance of the ESP-CKS scheme and compare it with existing schemes. We evaluate the performance theoretically and experimentally.

\subsection{Theoretical Analysis}\label{s_6_1}
We now discuss the practical storage, computation, communication costs of our scheme and compare them with that in conjunctive keyword search schemes ODXT \cite{patranabis2021forward} and SE-EPOM \cite{liu2020privacy}. Notably, there are two types of servers in SE-EPOM. Here we only focus on the performance of the Cloud Platform (CP). We give a list of notations for comparative analysis in Table \ref{tabno}.

\begin{table*}[htbp]
	\renewcommand{\multirowsetup}{\centering}
	\renewcommand\arraystretch{1.3}
	\begin{small}
		\caption{\label{tabcc} Comparison of Conjunctive Keyword Search Schemes.}
		\centering
		\begin{tabular}{|p{2cm}<{\centering}|p{2cm}<{\centering}|p{1.5cm}<{\centering}|p{0.25\columnwidth}<{\centering}|p{3cm}<{\centering}|p{4.8cm}<{\centering}|}
			\hline
			\multicolumn{3}{|c|}{}                                                                                                                                      & SE-EPOM \cite{liu2020privacy}                                           & \multicolumn{1}{c|}{ODXT \cite{patranabis2021forward}}                 & \multicolumn{1}{c|}{OURS}                                                                                                                     \\ \hline	\hline
			\multicolumn{1}{|c|}{\textbf{storage}}                       & \multicolumn{2}{c|}{\textbf{storage size(S)}}                                                & $M\mathsf{Z_{\lambda^2}}$                         & \multicolumn{1}{c|}{$~~~~~~N\lambda+m~~~~~~$}         & \multicolumn{1}{c|}{$N\lambda+m\lambda+|\mathcal{W}|\lambda$}                                                                                 \\ \hline 	\hline
			\multicolumn{1}{|c|}{\multirow{7}{*}{\textbf{computation}}}  & \multicolumn{1}{c|}{\multirow{2}{*}{\textbf{update cost}}}     & common                      & \multirow{2}{*}{$\mathsf{m}(e+T_\mathsf{MUL^2})$} & \multicolumn{2}{c|}{$\mathsf{N}(T_\mathsf{PRF}+T_\mathsf{MUL}+T_\mathsf{XOR}+e_{pre}+kT_\mathsf{hash})$}                                                                                  \\ \cline{3-3} \cline{5-6} 
			\multicolumn{1}{|c|}{}                                       & \multicolumn{1}{c|}{}                                          & additional                  &                                                   & \multicolumn{1}{c|}{N/A}                  & \multicolumn{1}{c|}{$(\mathsf{u}+m)T_\mathsf{PRF}+\mathsf{u}T_\mathsf{ADD}$}                                                 \\ \cline{2-6} 
			\multicolumn{1}{|c|}{}                                       & \multicolumn{1}{c|}{\multirow{2}{*}{\textbf{search cost (S)}}} & common                      & $M(\theta+1)\cdot$                           & \multicolumn{2}{c|}{$C(n-1)(e+kT_\mathsf{hash})$}                                                                                                                                         \\ \cline{3-3} \cline{5-6} 
			\multicolumn{1}{|c|}{}                                       & \multicolumn{1}{c|}{}                                          & additional                  & $(e+T_\mathsf{MUL^2})$                        & \multicolumn{1}{c|}{N/A}                  & \multicolumn{1}{c|}{$C(m'T_\mathsf{XOR}+T_\mathsf{Dec})$}                                                                                     \\ \cline{2-6} 
			\multicolumn{1}{|c|}{}                                       & \multicolumn{1}{c|}{\multirow{3}{*}{\textbf{search cost (C)}}} & common                      & \multirow{3}{*}{$M(e+T_\mathsf{MUL^2})$}          & \multicolumn{2}{c|}{$C((n+1)T_\mathsf{PRF}+(n-1)e_{pre}+T_\mathsf{XOR})$}                                                                                                                 \\ \cline{3-3} \cline{5-6} 
			\multicolumn{1}{|c|}{}                                       & \multicolumn{1}{c|}{}                                          & \multirow{2}{*}{additional} &                                                   & \multicolumn{1}{c|}{\multirow{2}{*}{N/A}} & $n(T_\mathsf{PRF}+T_\mathsf{ADD})+T_\mathsf{Comp}+$ \\
			\multicolumn{1}{|c|}{}                                       & \multicolumn{1}{c|}{}                                          &                             &                                                   & \multicolumn{1}{c|}{}                     & $C(m'T_\mathsf{PRF}+m'T_\mathsf{XOR}+T_\mathsf{Enc})$                                                                                         \\ \hline	\hline
			\multicolumn{1}{|c|}{\multirow{2}{*}{\textbf{comunication}}} & \multicolumn{1}{c|}{\multirow{2}{*}{\textbf{cost}}}            & common                      & \multirow{2}{*}{$MO(\lambda^2)$}                  & \multicolumn{2}{c|}{$C((n-1)\mathsf{G}+\lambda)+\mathsf{r}O(\lambda)$}                                                                                                                    \\ \cline{3-3} \cline{5-6} 
			\multicolumn{1}{|c|}{}                                       & \multicolumn{1}{c|}{}                                          & additional                  &                                                   & \multicolumn{1}{c|}{N/A}                  & \multicolumn{1}{c|}{$nO(\lambda)+CO(m'+2\lambda)$}                                                                                        \\ \hline
		\end{tabular}
	\end{small}
\end{table*}

\begin{table}[htbp] 
\centering
\begin{small}
\caption{\label{tabno} Notations for Comparative Analysis.}
\centering
\begin{tabular}{|c|l|} 
\hline  
\textbf{Notation} & \textbf{Meaning} \\
\hline
\hline
$C$ & update frequency of the $s$-term \\
\hline
$e$ & number of exponentiations \\
\hline
$e_{pre}$ & number of preprocessed exponentiations \\ 
\hline
$\mathsf{u}$ & number of keywords updated in a batch \\
\hline
$\mathsf{N}$ & number of triple $(op,id,w)$ updated in a batch \\
\hline
$\mathsf{m}$ & number of files updated in a batch \\
\hline
$\mathsf{r}$ & number of pairs $(id,op)$ in the search result \\
\hline
$\theta$ & the maximum number of keywords in set $\mathcal{W}$\\
\hline
$\mathsf{G}$ & size of an element from $\mathbb{G}$ \\
\hline
$\mathsf{Z_{\lambda^2}}$ & size of an element from $\mathbb{Z}_{\lambda^2}^{\ast}$ \\
\hline
$T_\mathsf{PRF}$ & time taken to compute a PRF \\
\hline
$T_\mathsf{XOR}$ & time taken to execute an XOR operation over $\lambda$ \\
\hline
$T_\mathsf{hash}$ & time taken to compute a hash of Bloom filter \\
\hline
$T_\mathsf{ADD}$ & time taken to perform an addition over $\lambda$ \\
\hline
$T_\mathsf{MUL}$ & time taken to perform a multiplication over $\lambda$ \\
\hline
$T_\mathsf{MUL^2}$ & time taken to perform a multiplication over $\lambda^2$ \\
\hline
$T_\mathsf{Comp}$ & time taken to compare $n$ values. \\
\hline
$T_\mathsf{Enc}$ & time taken to perform a symmetric encryption. \\
\hline
$T_\mathsf{Dec}$ & time taken to perform a symmetric decryption. \\
\hline  
\hline
\end{tabular} 
\end{small}
\end{table}

\noindent \textbf{Storage:} Focusing on the storage size, the server in ODXT stores the dictionary $TSet$ and the Bloom filter of $XSet$. For a triple $(op,id,w)$, a $O(\lambda)$-size entry of the form $(addr,\alpha,val)$ is added to $TSet$, and the size of $XSet$ is $m$. In comparison to ODXT, we adopt ciphertext of a Bloom filter with $m\lambda$-size instead of $XSet$. Moreover, the server stores the ciphertext of keyword update frequency of size $|\mathcal{W}|\lambda$. If update operations contain $N$ triples $(op,id,w)$, the overall storage size of our ESP-CKS scheme is $N\lambda+m\lambda+|\mathcal{W}|\lambda$. In SE-EPOM, server stores $M$ entries and each encrypted by Enc algorithm \cite{LiuXimeng2016} of $\mathsf{Z_{\lambda^2}}$-size. Hence, the storage cost of SE-EPOM is more than the proposed scheme. 

\noindent \textbf{Update Computation Overhead:} During the update phase, the computational cost is generated by the client. For a batch update containing multiple triples $(op,id,w)$, in our scheme, the time taken to encrypt keywords update frequencies is $\mathsf{u}T_\mathsf{PRF}+\mathsf{u}T_\mathsf{ADD}$. For the computation in $TSet$, the cost of generating $\mathsf{N}$ entries $(addr,val,\alpha)$ is $\mathsf{N}(T_\mathsf{PRF}+T_\mathsf{XOR}+T_\mathsf{MUL})$. 
For the generation of $xtag$s and $xtagBF$, the costs are $\mathsf{N}e_{pre}+\mathsf{N}kT_\mathsf{hash}$ and $mT_\mathsf{PRF}$, respectively. Note that the ODXT shares a lot of similarities with our scheme on the stage of $TSet$ computation and the Bloom filter generation involving $xtag$s, the cost is $\mathsf{N}(T_\mathsf{PRF}+T_\mathsf{MUL}+T_\mathsf{XOR}+e_{pre}+kT_\mathsf{hash})$.
Although forward and backward privacy cannot be guaranteed, the static SE-EPOM protocol can also perform adding files operation. In SE-EPOM, the client encrypts a decimal number of keywords in a file with the proposed public key encryption method, which involves exponentiations and multiplications, and the overall update cost is $\mathsf{m}(e+T_\mathsf{MUL^2})$.

\noindent \textbf{Search Computation Overhead:} We now investigate the search computation overhead of our ESP-CKS and contrast it with that of the ODXT and SE-EPOM. Both server-side and client-side should be considered during the search phase.

First, we focus on the computational cost of our scheme. The client spends $n(T_\mathsf{PRF}+T_\mathsf{ADD})+T_\mathsf{Comp}$ when it generates frequency tokens. In the subsequent interaction, the client computes $saddr$s and $xtoken$s costing $C(nT_\mathsf{PRF}+(n-1)e_{pre})$. 
The server spends $C(n-1)e$ to compute $xtag$s and $C(n-1)kT_\mathsf{hash}$ for membership test in Bloom filters. The client computes $\mathsf{sBF}$ through $\mathsf{SHVE.KeyGen}$, which costs $m'T_\mathsf{PRF}+m'T_\mathsf{XOR}+T_\mathsf{Enc}$. 
The server spends $C(m'T_\mathsf{XOR}+T_\mathsf{Dec})$ to perform $\mathsf{SHVE.Query}$. Finally, the client decrypts the search result costing $C(T_\mathsf{PRF}+T_\mathsf{XOR})$. The search computation overhead investigated above is presented in Table \ref{tabcc}.

In ODXT, the server spends $C(n-1)(e+kT_\mathsf{hash})$ to generate $xtag$s and calculate the location of elements in the Bloom filter. The client spends $C(nT_\mathsf{PRF}+(n-1)e_{pre})$ in $saddr$s and $xtoken$s generation, and $C(T_\mathsf{PRF}+T_\mathsf{XOR})$ in result decryption, respectively.

We now compare the computational cost on the server-side and client-side between our ESP-CKS scheme and the ODXT scheme. We define $O_{s}$ on the server-side as
\begin{eqnarray*}
O_{s}=\frac{m'T_\mathsf{XOR}+T_\mathsf{Dec}}{(n-1)(e+kT_\mathsf{hash})}.
\end{eqnarray*}

According to the relationship between execution time of different operations mentioned in \cite{lai2018result}, we conclude that additional cost on server side is only about $1.8\%$, while false positive rate is $10^{-6}$, $n=2$, $k\approx{\log_2(1/P_e)}\approx20$, and $m'=(n-1)k$.

Similarly, we define $O_{c}$ on the client side as
\begin{eqnarray*}
O_{c}=\frac{T_{freq}+T_{sBF}}{C((n+1)T_\mathsf{PRF}+(n-1)e_{pre}+T_\mathsf{XOR})},
\end{eqnarray*}
where $T_{freq}=n(T_\mathsf{PRF}+T_\mathsf{ADD})+T_\mathsf{Comp}$ and $T_{sBF}=C(m'T_\mathsf{PRF}+m'T_\mathsf{XOR}+T_\mathsf{Enc})$. Likewise, we can deduce that additional cost on server side is about $15\%$ when $C=10$, and it decreases as $n$ and $C$ increase. Therefore, our scheme achieves an enhanced security with minimum compromise in search efficiency.

In SE-EPOM, the client needs to generate a trapdoor and decrypt the results with a weak private key, which costs $M(e+T_\mathsf{MUL^2})$. On the server-side, the computation overhead of protocol SBD \cite{samanthula2013efficient}, SAD, and SMD is $\theta(e+T_\mathsf{MUL^2})$. Moreover, NOT-operation in ciphertext takes $e+T_\mathsf{MUL^2}$. Hence, the overall search computational cost is $M(\theta+1)(e+T_\mathsf{MUL^2})$ for the server. Note that the server computation overhead is linear with $C$ and $n$ in our scheme, but scales with $M$ and $\theta$ in SE-EPOM. It can be observed that the search in our scheme is more efficient because of $M\geq C$, $\theta\geq n$ and the probability of being equal is extremely low.

\noindent \textbf{Communication Overhead:} Focusing on communication cost in our scheme, the bandwidth of sending encrypted frequency values is $nO(\lambda)$. The interaction of transmit $saddr$s and $xtoken$s costs $C\lambda+C(n-1)\mathsf{G}$. The server sends the positions in $BF$ to the client with $CO(m')$ communication overhead. Then the client sends the output of $\mathsf{SHVE.KeyGen}$ with $C(O(m')+2\lambda)$ cost. Finally, the server costs $\mathsf{r}O(\lambda)$ to transmit the final result. Note that in ODXT, there is no communication about transmitting frequency values and ciphertext of Bloom filter, and the overall communication overhead is $C\lambda+C(n-1)\mathsf{G}+\mathsf{r}O(\lambda)$. In SE-EPOM, the client sends the search token with $O(\lambda^2)$, and the server sends encrypted result with $MO(\lambda^2)$. The communication overhead investigated above is presented in Table~\ref{tabcc}.

\subsection{Experimental Evaluations}\label{s_6_2}
We conduct the following experiments to evaluate the efficiency of our scheme in terms of setup, update, and conjunctive search time. Each experimental result is the average execution times over 10 runs.

\begin{figure}[!t]
	\centering
	\subfigure[$|CNT(w_2)|=2^{10}$]{\includegraphics[width=1.7in]{./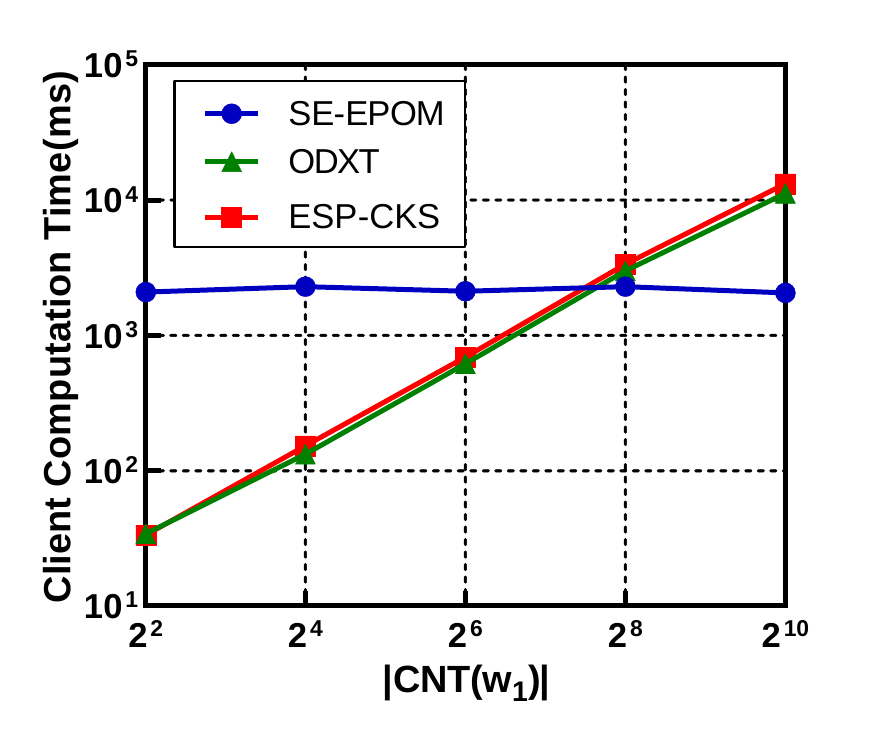}\label{Fig_2_1}}
	\hfil
	\subfigure[$|CNT(w_1)|=2^{2}$]{\includegraphics[width=1.7in]{./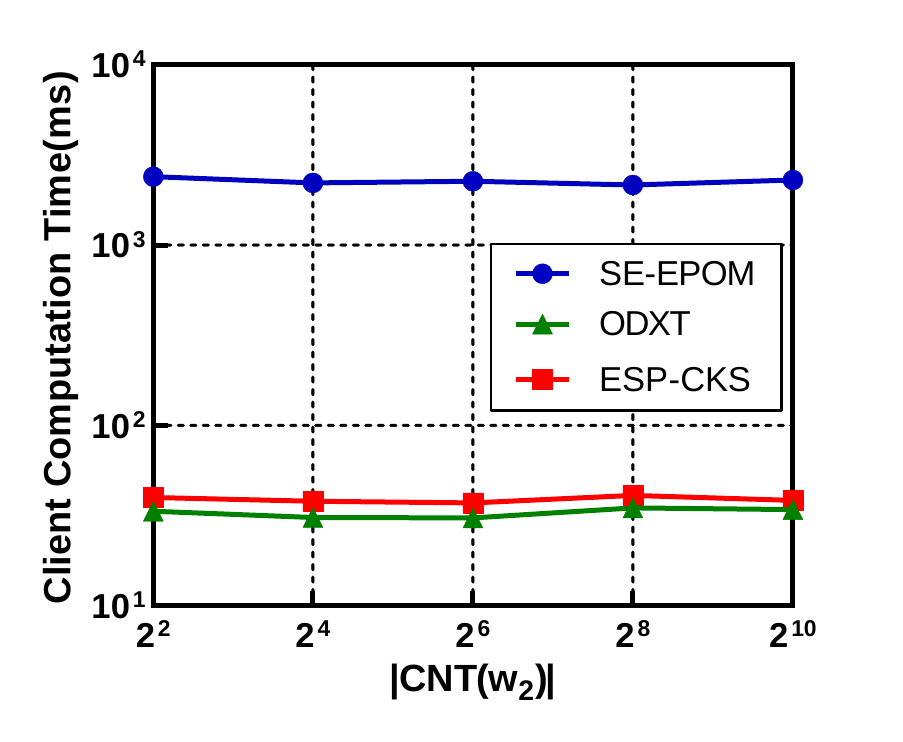}\label{Fig_2_2}}
	\vfil
	\subfigure[$|CNT(w_2)|=2^{10}$]{\includegraphics[width=1.7in]{./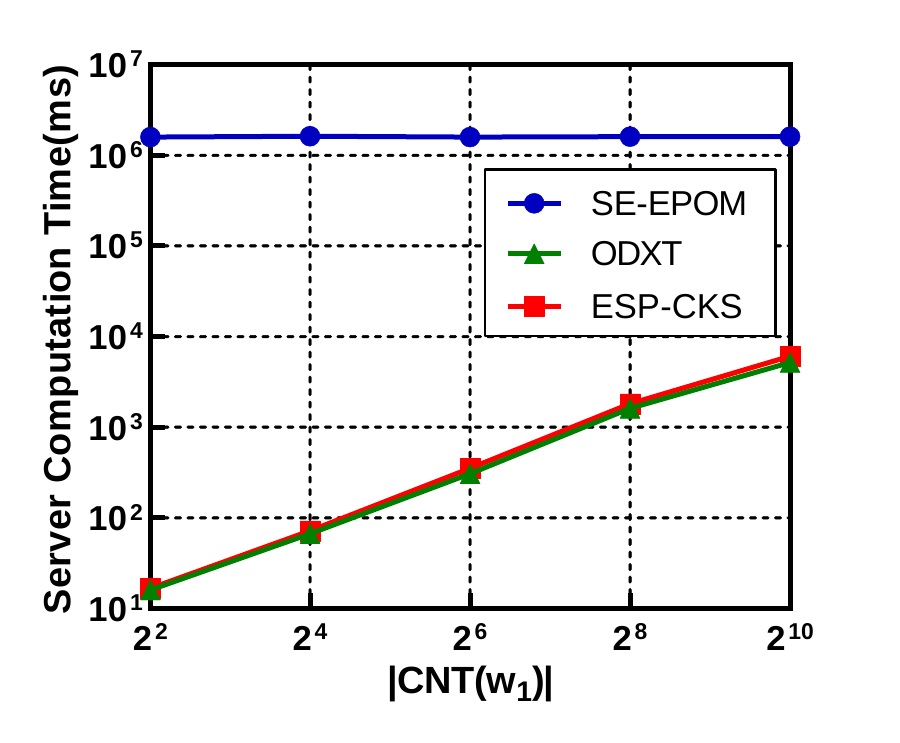}\label{Fig_2_3}}
	\hfil
	\subfigure[$|CNT(w_1)|=2^{2}$]{\includegraphics[width=1.7in]{./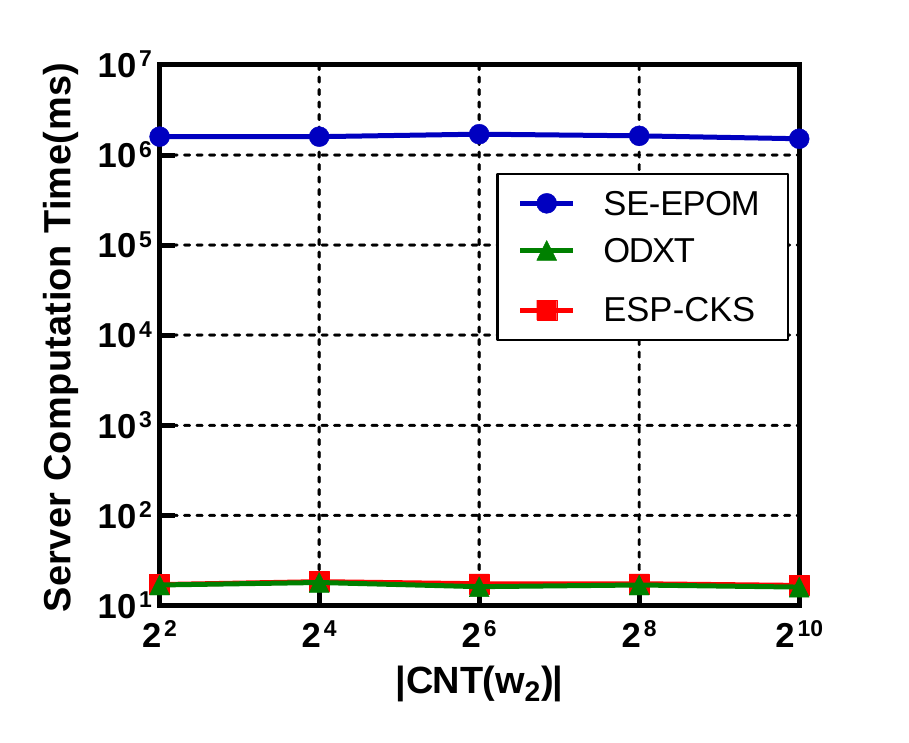}\label{Fig_2_4}}
	\hfil
	\caption{Two-conjunctive search query $q=(w_1\wedge{w_2})$ computation time over client and server.}
	\label{Fig_2}
\end{figure}

\begin{table}[htbp] 
	\caption{\label{overhead} Overhead Comparison.}
	\centering
	\begin{tabular}{c|c|c|c} 
		\hline  
		& SE-EPOM & ODXT & OURS \\
		\hline
		setup(ms) & 4006 & 416 & 468\\
		\hline  
		update-1000(ms) & 2.91 & 55.27 & 61.01\\
		\hline  
		update-10000(ms) & / & 42.34 &73.21 \\
		\hline
	\end{tabular} 
\end{table}

\noindent \textbf{Implementation details.} We implement our ESP-CKS with Python. The communication is simulated in a single-threaded environment to facilitate testing time. We conduct our experiments on a PC with a 1.30 GHz eight-core processor and 16GB RAM. The security parameter is $\lambda = 256$. We deploy the scalable Bloom filter from Paulo S{\'e}rgio Almeida \cite{almeida2007scalable}. The false positive rate of the Bloom filter is set to $10^{-6}$, which is negligible and enables the server to perform the membership test accurately. For comparison, we also implement conjunctive keyword search schemes ODXT \cite{patranabis2021forward} and SE-EPOM \cite{liu2020privacy}. Notation $\theta$ denotes the maximum number of keywords in set $\mathcal{W}$. The outputs of $F, F_p, F_2$ are 256 bits long, while that of $F_1$ is 512 bits long.

\noindent \textbf{Datasets:} 
We employ two different sorts of datasets: one with 1,200 files and the other with 12,500 files. Because SE-EPOM runs slow in a single thread on an extensive database without supporting deleting files, the dataset with 1,200 files is exploited to compare the three works. The dataset with 12,500 files is to compare our ESP-CKS with ODXT. We ensure that each document in datasets is inserted at least once. Update operations can be additions or deletions of batch files, 20\% of which are deletions. 

\begin{figure}[!t]
	\centering
	\subfigure[$|CNT(w_2)|=2^{10}$]{\includegraphics[width=1.7in]{./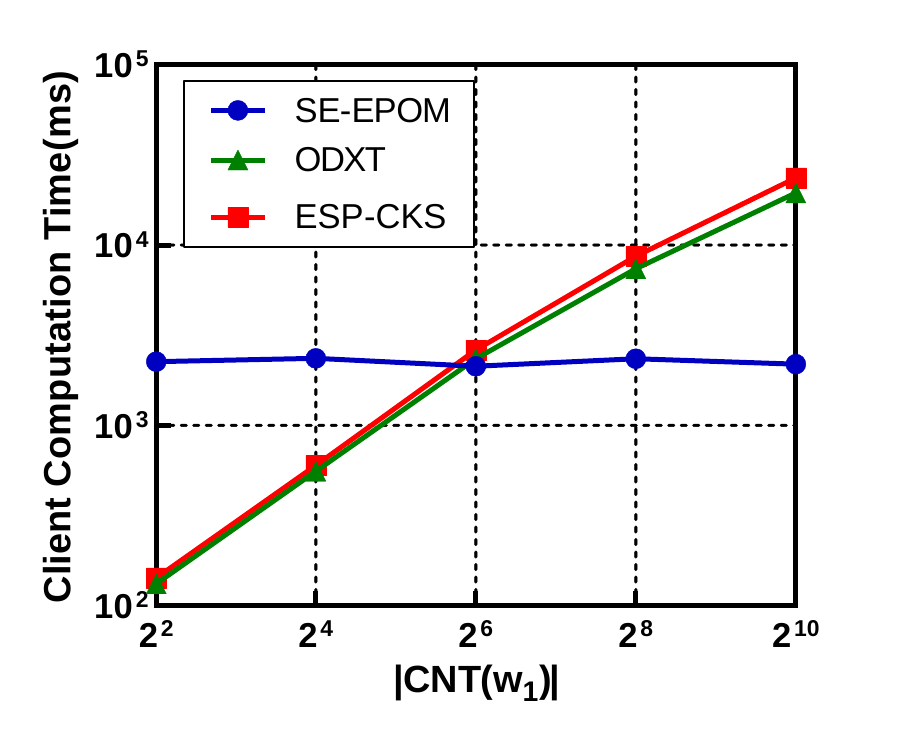}\label{Fig_3_1}}
	\hfil
	\subfigure[$|CNT(w_1)|=2^{2}$]{\includegraphics[width=1.7in]{./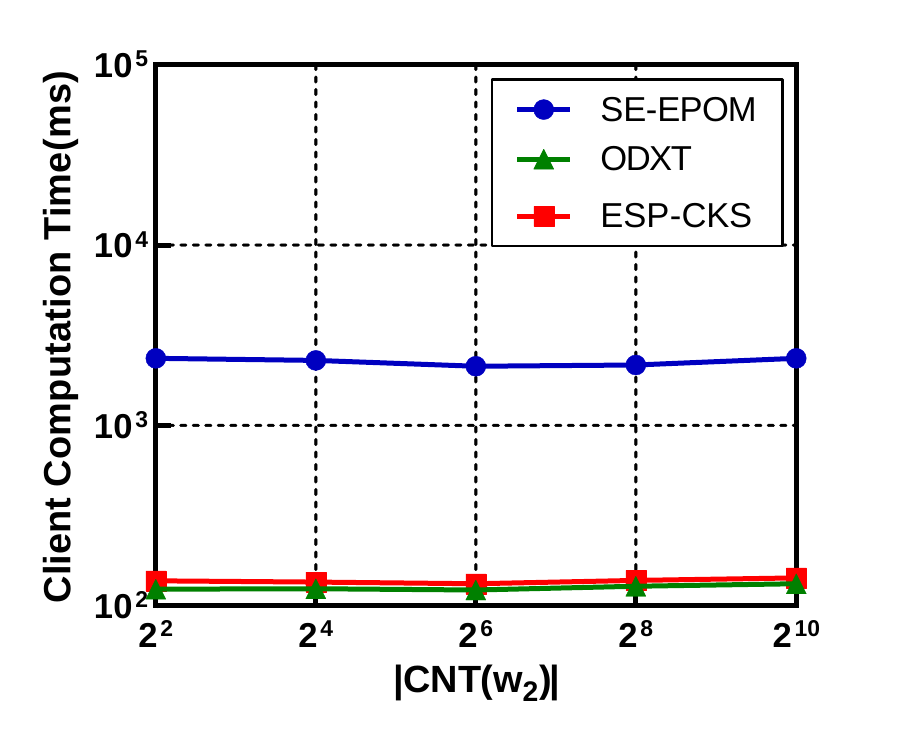}\label{Fig_3_2}}
	\vfil
	\subfigure[$|CNT(w_2)|=2^{10}$]{\includegraphics[width=1.7in]{./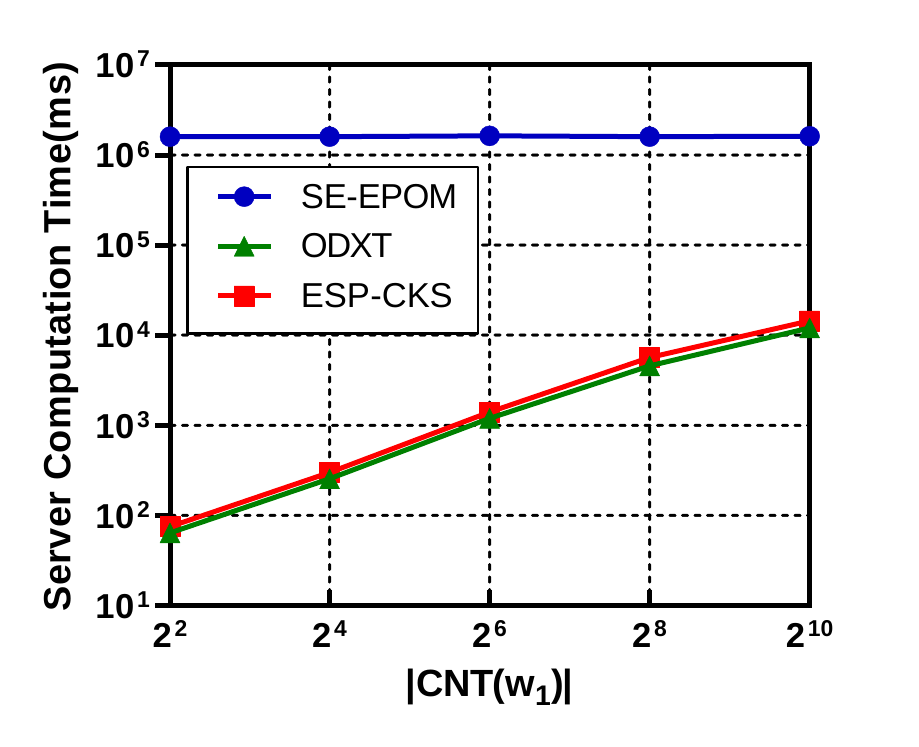}\label{Fig_3_3}}
	\hfil
	\subfigure[$|CNT(w_1)|=2^{2}$]{\includegraphics[width=1.7in]{./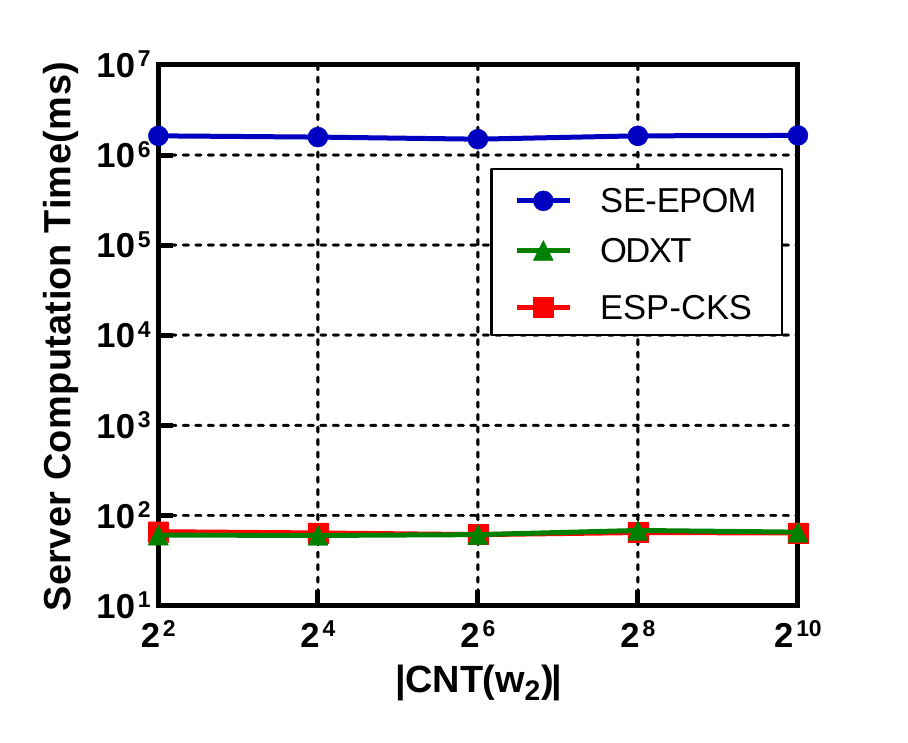}\label{Fig_3_4}}
	\hfil
	\caption{Multi-conjunctive search query $q=(w_1\wedge\cdots\wedge{w_5})$ computation time over client and server.}
	\label{Fig_3}
\end{figure}

\noindent \textbf{Setup:} Table \ref{overhead} compares the costs of the Setup (KeyGen in SE-EPOM) algorithm. The time taken in ODXT and our ESP-CKS scheme are similar and better than that of SE-EPOM, since SE-EPOM consumes more time assigning the secret keys under the public key system.

\noindent \textbf{Update:} Table \ref{overhead} also compares the cost of the client in the Update(Store in SE-EPOM) algorithm, where each update involves 1000 or 10,000 documents. For SE-EPOM, although it takes less time than our ESP-CKS, the variety of keywords is limited because it must be agreed upon in advance. Moreover, update calculation overhead of SE-EPOM scales with value $\theta$, so it is more expensive than our scheme if $\theta$ is initially set to be large and some documents have just a few keywords. Comparing ODXT with our scheme, our scheme takes longer due to keyword pair result privacy, this is because the update phase requires the Bloom filter to be encrypted, and the computational overhead of this part depends on the size of the Bloom filter. However, when we update batch files, the average cost of each file between the two schemes is not significant, because files in a batch update amortize the cost of encrypted bloom filter. Therefore, our scheme does not sacrifice significant update efficiency but offers a higher level of security.

\noindent \textbf{Search:} Figures \ref{Fig_2}, \ref{Fig_3}, and \ref{Fig_4} show the comparisons of computation overheads in Search (Test in SE-EPOM) algorithm for various schemes. 

Figure~\ref{Fig_2} shows the impact of the update frequency for conjunctive searches involving two keywords on query efficiency. In this implementation, we set the frequency of one term to a constant value and changed the frequency of another term from $2^2$ to $2^{10}$. The dataset contains 1200 files, and $\theta$ of scheme SE-EPOM is 10. Assuming $w_1$ is the keyword $s$-term with the least frequency, both the client and the server computation overhead in ESP-CKS and ODXT scale with the frequency of the $s$-term, while SE-EPOM has a constant but prohibitively expensive computation cost. This also demonstrates the significance of selecting the least frequent keyword. 

Similarly, Figure~\ref{Fig_3} shows the impact of the update frequency for conjunctive searches involving multiple keywords on query efficiency. Note that the difference in performance between ESP-CKS and ODXT is not significant, but keyword pair result privacy is protected in ESP-CKS. Therefore, our scheme is better than ODXT and SE-EPOM.

\begin{figure}[!t]
	\centerline{
		\subfigure[Client overhead]{\includegraphics[width=1.7in]{./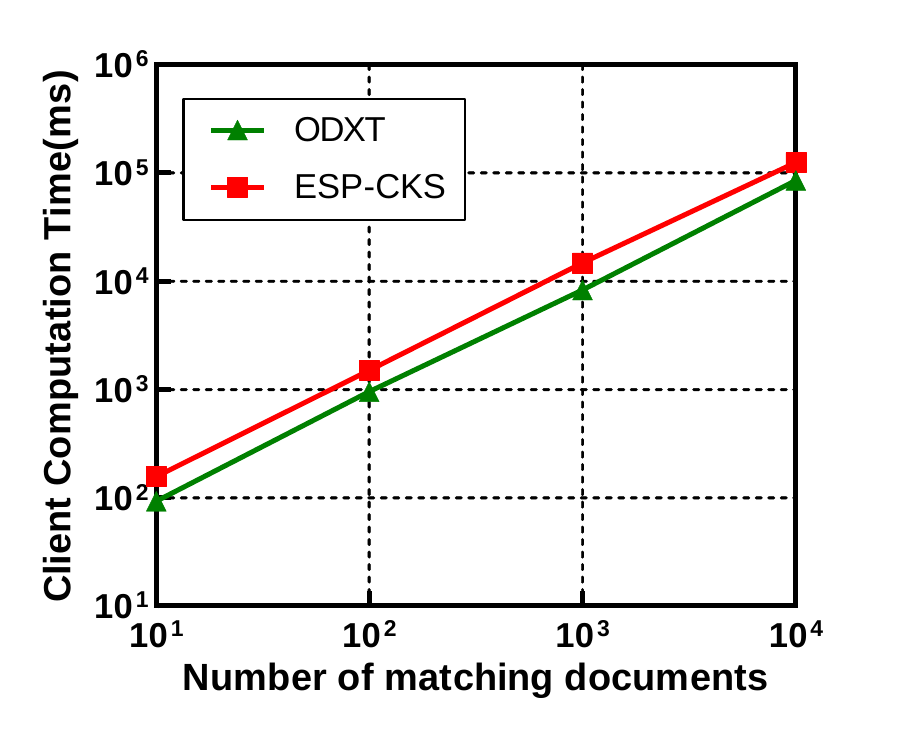}\label{Fig_4_1}}
		\hfil
		\subfigure[Server overhead]{\includegraphics[width=1.7in]{./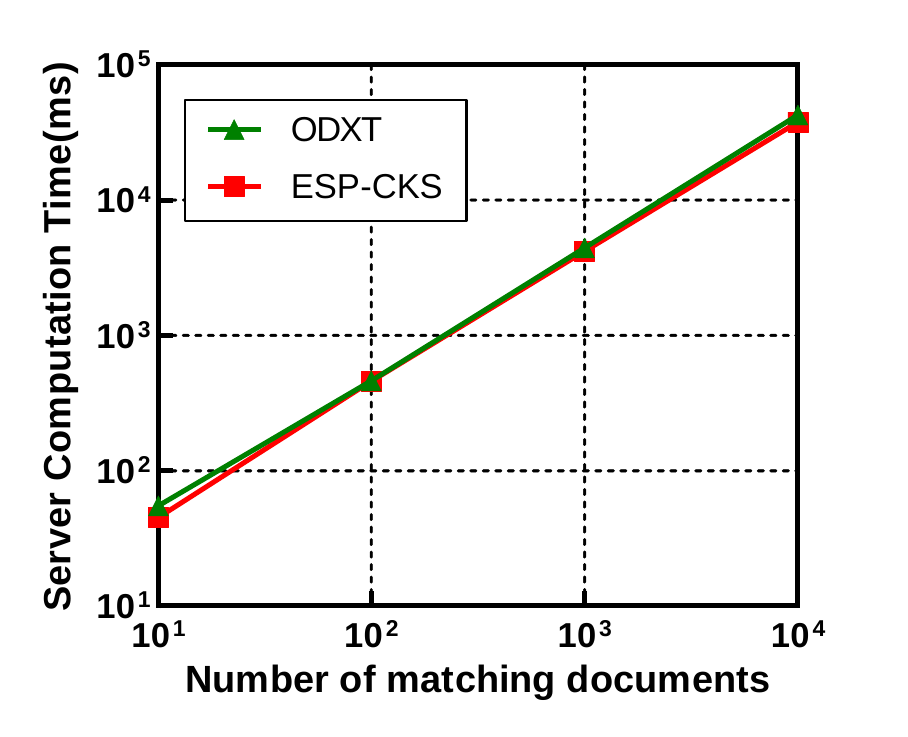}\label{Fig_4_2}}
		\hfil
	}
	\caption{Two-conjunctive search query $q=(w_1\wedge{w_2})$ computation time over client and server.}
	\label{Fig_4}
\end{figure}

Figure~\ref{Fig_4} compares ESP-CKS and ODXT for the computation overhead. We adopt the dataset with 12,500 documents and perform conjunctive search queries with 10 to 10,000 matching documents. Note that the time spent by client and server in our scheme and ODXT is proportional to the number of matching documents, and there is not much difference between the time cost of both. However, our scheme offers a higher level of security.

Based on the evaluations above, the computation overhead of SE-EPOM is prohibitively expensive, and ESP-CKS closely matches ODXT but performs better in privacy protection. Overall, our scheme delivers superior functionality and security, as shown in Table~\ref{tab1}.

\section{Related Work} \label{s_7}

SSE was first proposed by Song et al. \cite{song2000}, the search cost of the proposed SSE scheme is linear with the number of file-keyword pairs. To improve search efficiency, Curtmola et al. \cite{Curtmola2006} presented an inverted index data structure later to achieve sub-linear search time, but focused mainly on single keyword search. Cash et al. \cite{cash2013highly} proposed the first sub-linear SSE protocol named Oblivious Cross-Tags (OXT) that supports conjunctive keyword search.  The computational overhead of search scales with the update frequency for $s$-term, and sub-linear with the size of the database. It also uses a dictionary $TSet$ to make it possible to correlate a list of fixed-sized data tuples with each keyword in the database, and later allows the server to retrieve these lists via keyword-related tokens. Based on the above, the cloud server can first search for the encrypted data tuples associated with documents containing the $s$-term, then determine over encrypted data whether the documents match the other desired keywords. Lai et al. \cite{lai2018result} improved the OXT protocol in terms of private information leakage and proposed the Hidden Cross-Tags (HXT) protocol. According to \cite{lai2018result}, the OXT protocol leads to Keyword Pair Result Pattern (KPRP) leakage during the search phase. However, the file-injection attacks \cite{zhang2016all} can exploits this leakage to reveal all keywords in the conjunction with 100\% accuracy. Hence, it is essential to eliminate KPRP leakage.

The SSE schemes above focused mainly on static database settings. To support updates of the encrypted database, Kamara et al. \cite{Kamara2012} introduced the DSSE scheme, but it is vulnerable to file-injection attacks. SSE providing forward privacy firstly defined by Stefanov et al. \cite{stefanov2013practical} which enables to resist the attacks, and the first efficient DSSE scheme is proposed to achieve small leakage based on oblivious RAM (ORAM). In the setting of single keyword search, the schemes \cite{CRYPTO2016TWORAM, demertzis2019dynamic} can provide forward privacy based on ORAM. However, they suffer from prohibitively expensive computation overhead. Yang et al. \cite{yang2017rspp} proposed  a novel dynamic searchable symmetric encryption scheme with forward privacy by maintaining an increasing counter for each keyword. Song et al. \cite{song2018forward} presented the efficient FAST scheme based on symmetric key cryptography, which achieves forward privacy by singly linked lists structure and a pseudorandom permutation. Li et al. \cite{li2019searchable} developed the hidden pointer technique (HPT) to further strengthen security while ensuring forward privacy. Whereas, aforementioned works focused mainly on single keyword search. In order to provide conjunctive keyword search, Kamara and Moataz \cite{kamara2017boolean} presented IEX-2Lev and IEX-ZMF, achieving forward security. Wu et al.'s scheme \cite{wu2019vbtree} exploited counters and building trees with efficient queries to guarantee forward private conjunctive search. 

Backward privacy was later formalized by Bost et al. \cite{bost2017forward}. Subsequently, Javad et al. \cite{ghareh2018new} achieve backward privacy based on ORAM, and Sun et al. \cite{sun2018practical} adopted symmetric puncturable encryption technology to guarantee backward privacy. He et al. \cite{he2020secure} proposed CLOSE-FB scheme with forward and backward privacy based on fish-bone chain, and the scheme only stores a secret key and a global counter on the client achieving constant client storage cost. Xu et al. \cite{xu2021bestie} presented a forward and backward secure DSSE scheme named Bestie which also attains non-interactive real deletion, and Xu et al. \cite{xu2022rose} presented the robust DSSE scheme with forward and backward security based on the key-updatable pseudorandom function. However, these schemes can only perform single keyword search. Zuo et al. \cite{zuo2022range} considered range queries, and developed a scheme called FBDSSE-RQ based on their refined binary tree which can retrieve files containing keywords in a given range. For conjunctive queries that achieve forward and backward privacy, a FBDSSE-CQ scheme using extended bitmap indexing was proposed by Zuo et al. \cite{zuo2020iacr}, but it is severely limited to to the number of search keywords. The most effective existing scheme that is not restricted by keywords is the ODXT scheme \cite{patranabis2021forward,patranabis2020forward}, but there is still the threat of KPRP leakage, and the problem of frequent interactions between clients in the search phase.

\section{Conclusion}\label{s_8}
This work proposes the first efficient conjunctive keyword search ESP-CKS scheme with strong privacy. It supports dynamic databases and achieves forward and backward privacy. Besides, it eliminates the threat of keyword pair result leakage, which has never been guaranteed in existing dynamic conjunctive keyword search schemes. It is independent of the total number of documents in the database but scales with the least frequency of updates. Furthermore, the least frequent keyword acquisition protocol presented can efficiently attain the lowest frequency in a non-interactive approach, eliminating frequent communication between the data owner and search users. Compared with other keyword search approaches, our scheme offers a higher level of security, and experiment results demonstrate that it performs better.

\section*{Acknowledgments}
This work is supported by the National Natural Science Foundation of China (Grant Nos. 61972037, 61872041, U1804263), the China Postdoctoral Science Foundation (Grant Nos. 2021M700435, 2021TQ0042).

\bibliographystyle{IEEEtran}
\bibliography{fullfinal}

\appendices
\section{Proof of Theorem 1} 
We prove through a sequence of games between a challenger and an adversary. We start from $\textbf{Real}^{\Pi}_{\mathcal{A}}(\lambda)$ and construct a sequence of games where each game is designed slightly differently from the previous one, and show that they are computationally indistinguishable for the adversary $\mathcal{A}$. 
The final construction is a simulation game $\textbf{Ideal}^{\Pi}_{\mathcal{A,S}}(\lambda)$. Due to the transitive property of each two successive games' indistinguishability, $\textbf{Ideal}^{\Pi}_{\mathcal{A,S}}(\lambda)$ is computationally indistinguishable from the $\textbf{Real}^{\Pi}_{\mathcal{A}}(\lambda)$. Eventually, we conclude that the views of $\mathcal{A}$ in $\textbf{Real}^{\Pi}_{\mathcal{A}}(\lambda)$ and $\textbf{Ideal}^{\Pi}_{\mathcal{A,S}}(\lambda)$ are computationally indistinguishable, thus completing the proof of the Theorem 1. 

$\textbf{Game}_0$: Game$_0$ is the same as the real game $\textbf{Real}^{\Pi}_{\mathcal{A}}(\lambda)$ of our scheme.

$\textbf{Game}_1$: Game$_1$ is the same as Game$_0$, except that we replace the PRFs $F(K_T,\cdot)$, $F_1(K,\cdot)$, $F_2(K,\cdot)$, $F_p(K_X,\cdot)$, $F_p(K_Y,\cdot)$, and  $F_p(K_Z,\cdot)$ with random functions $G_T(\cdot)$, $G_1(\cdot)$, $G_2(\cdot)$, $G_X(\cdot)$, $G_Y(\cdot)$, and $G_Z(\cdot)$, respectively. Specifically, the function $G_T(\cdot)$ is 
uniformly randomly sampled from the set of all functions that map $\lambda$-bit string onto $\lambda$-bit string, $G_1(\cdot)$ is uniformly randomly sampled from the set of all functions that map $\lambda$-bit string onto $2\lambda$-bit string, $G_2(\cdot)$ is uniformly randomly sampled from the set of all functions that map $\lambda$-bit string onto $\lambda$-bit string, while $G_X(\cdot)$, $G_Y(\cdot)$ and $G_Z(\cdot)$ are  uniformly randomly sampled from the set of all functions that map $\lambda$-bit string onto components in $\mathbb{Z}_p^{\ast}$.

\begin{lemma}
	Suppose that $F, F_1, F_2$, and $F_p$ are secure PRFs, the views of $\mathcal{A}$ in Game$_1$ and Game$_0$ are computationally indistinguishable.
\end{lemma}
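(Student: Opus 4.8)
The plan is to proceed by a standard hybrid argument, reducing the distinguishing advantage between Game$_0$ and Game$_1$ to the pseudorandomness of the function families $F$, $F_1$, $F_2$, and $F_p$. Since six keyed functions are replaced simultaneously, I would first interpose a chain of intermediate hybrids $\textbf{Hyb}_0, \textbf{Hyb}_1, \ldots, \textbf{Hyb}_6$, where $\textbf{Hyb}_0 = \textbf{Game}_0$ and $\textbf{Hyb}_6 = \textbf{Game}_1$, with each consecutive pair differing in exactly one replacement: $\textbf{Hyb}_1$ swaps $F(K_T,\cdot)$ for $G_T$, $\textbf{Hyb}_2$ additionally swaps $F_1(K,\cdot)$ for $G_1$, and so on through $F_2(K,\cdot)$, $F_p(K_X,\cdot)$, $F_p(K_Y,\cdot)$, and $F_p(K_Z,\cdot)$. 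Because all six keys are sampled independently and are never revealed to $\mathcal{A}$, each swap can be analyzed in isolation.

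Next, for each index $i$ I would construct a PPT reduction $\mathcal{B}_i$ that plays the PRF distinguishing game for the relevant family and interpolates between $\textbf{Hyb}_{i-1}$ and $\textbf{Hyb}_i$. The reduction $\mathcal{B}_i$ receives an oracle that is either the true PRF under a fresh secret key or a truly random function of matching domain and range; it then runs the entire ESP-CKS experiment with $\mathcal{A}$, honestly sampling every key and parameter not under scrutiny (and using the already-installed random functions for the families handled in earlier hybrids), while routing every evaluation of the function currently being replaced through its oracle. Each PRF input arising in \textsc{Setup}, \textsc{Update}, and \textsc{Search} --- for instance $w\|cnt_w\|0$ for $F(K_T,\cdot)$, the strings $r\|w$ and $w\|r$ for $F_1$ and $F_2$, and $w$, $id\|op$, $w\|cnt_w$ for the three instances of $F_p$ --- is fully determined by the transcript of $\mathcal{A}$'s adaptive queries, so $\mathcal{B}_i$ can compute exactly the inputs it requires and issues at most polynomially many oracle calls (the total query bound is $Q = poly(\lambda)$). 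When the oracle is the real PRF, $\mathcal{B}_i$ perfectly simulates $\textbf{Hyb}_{i-1}$; when it is random, it perfectly simulates $\textbf{Hyb}_i$. Hence each single-step gap is bounded by the corresponding PRF advantage, and by the triangle inequality the total difference $|\Pr[\textbf{Game}_0 = 1] - \Pr[\textbf{Game}_1 = 1]|$ is at most the sum of the negligible PRF advantages, which is itself negligible.

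The point that needs the most care --- and the step I expect to be the main obstacle --- is handling $F_p$, which is instantiated three times under the independent keys $K_X, K_Y, K_Z$, and whose outputs live in $\mathbb{Z}_p^{\ast}$ and are then fed into the group exponentiations forming the cross-tags $xtag$ and cross-tokens $xtoken$. I must confirm that the three instances are genuinely independent keyed evaluations, so that they may be replaced in three separate hybrid steps, and that substituting a random function does not disturb values reused across \textsc{Update} and \textsc{Search}; in particular the blinding factor $\alpha = F_p(K_Y, id\|op)\cdot F_p(K_Z, w\|cnt_w)^{-1}$ and the relation $xtag = (xtoken)^{\alpha}$ must continue to hold, which they do because the reduction answers identical inputs with identical oracle responses. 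I would also note explicitly that the domain separation built into the inputs (the trailing $\|0$ versus $\|1$ for $F(K_T,\cdot)$, and the ordering $r\|w$ versus $w\|r$ distinguishing $F_1$ from $F_2$) prevents the oracle from being queried on colliding points in a way that would muddy the clean real-versus-random dichotomy. Everything downstream of the PRFs --- the Bloom filter insertions, the SHVE ciphertexts, and the $TSet$ layout --- is computed identically in the two adjacent hybrids, so it contributes nothing to the distinguishing advantage and can be carried through verbatim.
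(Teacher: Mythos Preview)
Your proposal is correct and follows the same reduction-to-PRF-security approach as the paper, only spelled out in far greater detail: the paper's own proof is a two-sentence sketch asserting that any distinguisher between Game$_0$ and Game$_1$ yields a PRF distinguisher, without writing down the intermediate hybrids or the explicit reductions you supply. One small inaccuracy worth correcting is that $F_1$ and $F_2$ share the single key $K$ rather than being keyed independently, so your six-hybrid chain should either merge those two steps or rely on the assumption that the pair $(F_1(K,\cdot),F_2(K,\cdot))$ is jointly pseudorandom; this does not affect the structure of your argument.
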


\begin{proof}
	Assuming that there exists a probabilistic polynomial-time adversary $\mathcal{B}_1$ that can distinguish between the views of adversary $\mathcal{A}$ in Game$_0$ and Game$_1$. This implies that $\mathcal{B}_1$ can be used to construct a probabilistic polynomial-time adversary $\mathcal{B}_1^{\prime}$ that can distinguish pseudorandom functions from random functions. Obviously, this is contrary to the pseudorandomness of the pseudorandom function.
\end{proof} 

$\textbf{Game}_2$: Game$_2$ is the same as Game$_1$, except that we regenerate $xtoken$ in the search phase. Specifically, for a query $q=(w_1\wedge{w_2}\cdots\wedge{w_n})$, the challenger initially looks for the adversary $\mathcal{A}$'s history of update queries to get the set of update operations $(op_j, id_j, w_1)$ involving the $s$-term $w_1$. Then, for each $x$-term $w_i$, it computes $\alpha_{i,j}$ and $xtag_{i,j}$, where $xtag_{i,j}$ is stored in an array A. The array A is used to generate $BF$ and the element $A[op,id,w]$ is added to $BF$. Next, it computes $xtoken_{i,j}=xtag_{i,j}^{1/\alpha_{i,j}}.$ For an impossible matched $xtoken$, the challenger computes $xtoken_{i,j}=g^{G_X(w_i)\cdot{G_Z(w_1||j)}}$ and stores it in an array B.

It's trivial to see that the distribution of each $xtoken$ value in Game$_2$ is the same as that of in Game$_1$. Therefore, the view of the adversary $\mathcal{A}$ in Game$_2$ is indistinguishable from the view of the adversary $\mathcal{A}$ in Game$_1$.

$\textbf{Game}_3$: Game$_3$ is the same as Game$_2$, except that we generate $\alpha$ in an alternative but equivalent way during each update phase. We replace computing $\alpha$ with sampling randomly $\alpha\stackrel{R}{\leftarrow}\mathbb{Z}_p^{\ast}$. 

\begin{lemma}
	The views of $\mathcal{A}$ in Game$_3$ and Game$_2$ are statistically indistinguishable.
\end{lemma}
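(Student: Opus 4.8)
The plan is to show that the random variable $\alpha$ already produced in Game$_2$ is distributed exactly as an element drawn uniformly from $\mathbb{Z}_p^{\ast}$, so that replacing its explicit computation by direct sampling (Game$_3$) changes nothing observable except on a negligible event. Recall that after the substitution made in Game$_1$, each update forms $\alpha = G_Y(id||op)\cdot G_Z(w||Cnt[w])^{-1}$, where $G_Y$ and $G_Z$ are truly random functions into $\mathbb{Z}_p^{\ast}$.

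First I would argue that the inputs fed to $G_Z$ are pairwise distinct across all update operations. Because the client increments $Cnt[w]$ on every update touching $w$, the string $w||Cnt[w]$ never repeats; hence the family $\{G_Z(w||Cnt[w])\}$ consists of independent, uniformly random elements of $\mathbb{Z}_p^{\ast}$. Since $G_Y(id||op)$ is a fixed nonzero element and multiplication by a uniform element of $\mathbb{Z}_p^{\ast}$ is a bijection, $\alpha$ is uniform over $\mathbb{Z}_p^{\ast}$; moreover, independence of the fresh factors $G_Z(w||Cnt[w])$ across updates makes the entire collection of $\alpha$ values jointly independent and uniform, matching the sampling rule of Game$_3$ exactly.

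The delicate point, and the step I expect to be the main obstacle, is that $G_Z(w_1||j)$ is not confined to $\alpha$: in Game$_2$ the search-phase token for a cross term absent from a matched file (an ``impossible match'') is produced directly as $g^{G_X(w_i)\cdot G_Z(w_1||j)}$ and stored in array B, so the same fresh value couples $\alpha_j$ to that $xtoken$. In Game$_2$ this coupling forces the server's recomputed tag $xtoken_{i,j}^{\alpha_j}$ to collapse to the structured value $g^{G_X(w_i)\cdot G_Y(id_j||op)}$, whereas once $\alpha_j$ is resampled independently in Game$_3$ this recomputed tag becomes a uniformly random group element. I would therefore have to verify that this difference is invisible to $\mathcal{A}$: for a genuine (present) cross term the identity $xtag = xtoken^{\alpha}$ is preserved by construction in both games, so the membership tests and the returned $\mathsf{DB}(q)$ are unchanged; for an absent cross term the recomputed tag was never inserted into $BF$, so under the stated no-false-positive assumption the membership test fails in Game$_2$ and Game$_3$ alike.

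What remains is to bound the only event on which the two experiments can diverge, namely that a resampled random cross-tag in Game$_3$ accidentally equals one of the polynomially many genuine entries of $BF$, or that two of the structured absent-tags in Game$_2$ collide; each such coincidence occurs with probability $O(1/p)$, and a union bound over the at most $poly(\lambda)$ update and search operations keeps the total at $negl(\lambda)$. Combining the exact match of the $\alpha$-distribution with this collision bound yields the statistical indistinguishability of Game$_2$ and Game$_3$, as claimed.
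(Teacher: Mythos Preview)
Your first paragraph is essentially the paper's entire proof: the paper asserts in one sentence that each $\alpha$ is ``uniform and independent of the rest of the randomness in Game$_2$'', appealing (implicitly) to the very freshness of the $G_Z$-inputs that you spell out explicitly --- each string $w\|Cnt[w]$ is new because the counter increments, so the factors $G_Z(w\|Cnt[w])^{-1}$ are independent uniform elements of $\mathbb{Z}_p^{\ast}$ and the whole family of $\alpha$'s is jointly uniform. On that core point you and the paper coincide.

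Where you diverge is in your second and third paragraphs. You flag the coupling between $\alpha_j$ and the impossible-match tokens stored in array $B$ through the shared value $G_Z(w_1\|j)$; the paper does not discuss this at all --- its bare independence assertion is meant to absorb it. Your proposed resolution, though, is not the right kind of argument for \emph{statistical} indistinguishability. You reason about what the server can observe through the Bloom-filter membership test (present tags still verify, absent tags still fail under the no-false-positive assumption) and then bound accidental collisions by $O(1/p)$. But an information-theoretic adversary is handed the raw pair $(\alpha_j, xtoken_{i,j})$ in the transcript and can compute $xtoken_{i,j}^{\alpha_j}$ directly; agreement of test outcomes does not by itself establish that the joint distributions of the transcripts coincide. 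The paper's route --- simply claiming that $\alpha$ is independent of the rest and hence that the two games are identically distributed --- is distributional rather than behavioral, and requires no Bloom-filter or collision reasoning at all. So your proposal is more cautious than the paper in spotting the issue, but the detour through membership-test behavior does not actually discharge the concern you raised; the paper's (tersely stated) independence claim is the intended and cleaner path.
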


\begin{proof}
	Note that the previous $\alpha$ is the product of an evaluation of $G_Y(\cdot)$ with the inverse of an evaluation of $G_Z(\cdot)$, where $G_Y(\cdot)$ and $G_Z(\cdot)$ are uniformly randomly sampled from the set of all functions that map $\lambda$-bit string onto components in $\mathbb{Z}_p^{\ast}$. Moreover, the value of $\alpha$ is also uniform and independent of the rest of the randomness in {Game}$_2$. Consequently, the distribution of each $\alpha$ value in Game$_3$ is statistically indistinguishable from that in Game$_2$.  
\end{proof}

$\textbf{Game}_4$: Game$_4$ is the same as Game$_3$, except that we regenerate $xtag$ in array A and $xtoken$ in array B. We replace $xtag$ in array A with random sample $A[op,id,w]=g^{\gamma}$, where $\gamma\stackrel{R}{\leftarrow}\mathbb{Z}_p^{\ast}$. Moreover, we replace $xtoken$ in array B with random sample $B[w_1,w_i,j]=g^{\iota}$, where $\iota\stackrel{R}{\leftarrow}\mathbb{Z}_p^{\ast}$. That is, all of the values in arrays A and B are randomly picked from $\mathbb{G}$.

\begin{lemma}
	\label{lemma3}
	Suppose that the decisional Diffie-Hellman (DDH) assumption holds in the group $\mathbb{G}$, the views of $\mathcal{A}$ in Game$_4$ and Game$_3$ are computationally indistinguishable. 
\end{lemma}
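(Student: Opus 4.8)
The plan is to reduce the indistinguishability of Game$_3$ and Game$_4$ to the DDH assumption through a hybrid argument over the distinct keyword exponents. First I would observe that, after Game$_3$, every group element still carrying algebraic structure has the form $g^{G_X(w_i)\cdot y}$: the entries of array A are $g^{G_X(w_i)\cdot G_Y(id\|op)}$ and the entries of array B are $g^{G_X(w_i)\cdot G_Z(w_1\|j)}$, while the blinding factors $\alpha$ are already uniform and the matching $xtoken$s are determined from A by the known scalar $1/\alpha$. Writing $x_i=G_X(w_i)$ for the row index (keyword) and letting $y_k$ range over all distinct outputs of $G_Y$ and $G_Z$ as the column index, arrays A and B together form a partial matrix whose present entries are $g^{x_i y_k}$. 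The goal of Game$_4$ is exactly to replace every such entry by a uniform element of $\mathbb{G}$, so the statement is an instance of the ``matrix-DDH is pseudorandom'' fact; crucially, since the two arrays share the $G_X$ exponents, hybridizing over the rows $x_i$ randomizes the relevant entries of A and B simultaneously.

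Next I would define hybrids $\mathsf{H}_0,\dots,\mathsf{H}_{N_w}$, where $N_w$ is the number of distinct keywords, such that in $\mathsf{H}_\ell$ all entries $g^{x_i y_k}$ with $i\le\ell$ are replaced by independent uniform group elements while the remaining entries are kept real. Then $\mathsf{H}_0=\text{Game}_3$ and $\mathsf{H}_{N_w}=\text{Game}_4$, and it suffices to show that $\mathsf{H}_{\ell-1}$ and $\mathsf{H}_\ell$ are computationally indistinguishable. For the single step I would build a reduction $\mathcal{B}$ that, given a DDH challenge $(g,g^{a},g^{b},T)$, embeds $a$ as the $\ell$-th keyword exponent $x_\ell$. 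Because row $\ell$ must supply one element per column $k$, I would invoke the random self-reducibility of DDH: for each column set $g^{y_k}=(g^{b})^{s_k}g^{r_k}$ and the row-$\ell$ entry to $T^{s_k}(g^{a})^{r_k}$ with fresh uniform $s_k,r_k$; when $T=g^{ab}$ this equals $g^{x_\ell y_k}$, and when $T$ is random it is an independent uniform element. Thus a single DDH instance rerandomizes the whole of row $\ell$ at once.

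The remaining work is to let $\mathcal{B}$ simulate the entire view of $\mathcal{A}$ without knowing $x_\ell=a$. All outputs of the random functions $G_T,G_1,G_2$, the blinding factors $\alpha$, the SHVE master key $msk$, and the exponents $x_i$ for $i\ne\ell$ are sampled by $\mathcal{B}$ itself, so the $TSet$ address/value entries, the encrypted frequencies, and the search tokens derived from them are produced directly. The shared $y_k$ values are the point that needs care: $\mathcal{B}$ fixes the single family $\{g^{y_k}\}$ obtained from the challenge and reuses it across all rows, computing the real rows $i>\ell$ as $(g^{y_k})^{x_i}$ and sampling fresh uniform elements for the already-randomized rows $i<\ell$. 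Since $\mathcal{B}$ holds the array-A elements as group elements (and the scalars $\alpha$), it forms $BF$ by hashing them, computes $xtagBF=\mathrm{SHVE.Enc}(msk,\text{``True''},BF)$, derives the matching $xtoken$s as array-A elements raised to the known power $1/\alpha$, and answers all search queries exactly as in the game. Hence $\mathcal{B}$ reproduces $\mathsf{H}_{\ell-1}$ when $T=g^{ab}$ and $\mathsf{H}_\ell$ when $T$ is uniform, and a union bound over the $N_w=\mathrm{poly}(\lambda)$ hybrids yields the lemma.

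I expect the main obstacle to be exactly this simulation-consistency step rather than the DDH reduction in the abstract. The difficulty is that the column exponents $G_Y(id\|op)$ and $G_Z(w_1\|j)$ are shared across many keywords and also feed the Bloom filter that the adversary sees (in encrypted form) and the cross-tokens it receives; the reduction must therefore commit to one consistent family $\{g^{y_k}\}$ drawn from the challenge, express every genuinely structured element as a known power of these, and verify that raising challenge-derived group elements to the known scalars ($x_i$ and $1/\alpha$) suffices to generate $BF$, $xtagBF$, and the full search transcript without ever using the embedded discrete logarithm. Checking that this reconstruction is faithful and that the rerandomized columns remain mutually independent is the technical heart of the argument.
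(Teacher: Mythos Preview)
Your proposal is correct, and at the level of ideas it coincides with the paper's argument, though the packaging differs. The paper does not carry out the hybrid explicitly; instead it reformulates Lemma~\ref{lemma3} as an equivalent Lemma~4 stated under the \emph{extended} DDH assumption (the ``matrix-DDH'' fact you name), invokes the known polynomial equivalence of DDH and extended DDH over any group, and then observes in one line that a distinguisher between Game$_3$ and Game$_4$ immediately yields a distinguisher between $g^{G_X(w)\cdot G_Y(id\|op)}$ (resp.\ the array-B analogue) and a uniform group element. Your row-by-row hybrid with random self-reducibility is exactly the standard proof of that equivalence, so you are in effect reproving the cited black box rather than importing it. The trade-off is the obvious one: the paper's route is shorter and modular, while yours is self-contained and makes the simulation-consistency issues you flag (shared $y_k$ columns across A and B, reconstructing $BF$, $xtagBF$, and the matching $xtoken$s from group elements and known scalars $x_i$, $1/\alpha$) explicit rather than buried in the citation.
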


We prove Lemma \ref{lemma3} by the equivalent Lemma \ref{lemma4}, where the equivalence is derived from the (polynomial) equivalence of the DDH assumption and the extended DDH assumption [43] over any group $\mathbb{G}$. (The papaer [43] has already proved that the extended DDH assumption holds over a group $\mathbb{G}$ iff the DDH assumption holds over the same group $\mathbb{G}$.)

\begin{lemma}
	\label{lemma4}
	Suppose that the extended DDH assumption holds in the group $\mathbb{G}$, the views of $\mathcal{A}$ in Game$_4$ and Game$_3$ are computationally indistinguishable. 
\end{lemma}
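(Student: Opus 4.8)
The plan is a direct reduction to the extended DDH assumption. Suppose, for contradiction, that some PPT adversary $\mathcal{A}$ distinguishes Game$_3$ from Game$_4$ with non-negligible advantage; I would build a PPT distinguisher $\mathcal{B}$ for the extended DDH problem over $\mathbb{G}$. The first step is to read off the exact algebraic form of the values that change between the two games. After Game$_1$ the relevant exponents are the random-function outputs $a_w := G_X(w)$ for each distinct keyword $w$, $b_{id,op} := G_Y(id||op)$ for each distinct $(id,op)$ pair, and $c_{w_1,j} := G_Z(w_1||j)$ for each $s$-term/counter pair $(w_1,j)$. Every entry of array $A$ has the form $A[op,id,w]=g^{a_w\cdot b_{id,op}}$ and every entry of array $B$ has the form $B[w_1,w_i,j]=g^{a_{w_i}\cdot c_{w_1,j}}$. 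The crucial observation is that the factor $a_w=G_X(w)$ is \emph{shared} between the $A$-entries and the $B$-entries for the same keyword, so these group elements are correlated; reproducing exactly this bipartite shared-factor structure is what the extended DDH assumption provides, and is the reason plain DDH (which hides only a single product) does not suffice here.

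Second, I would set up the embedding. Since $\mathcal{A}$ issues at most $Q=poly(\lambda)$ queries, the numbers of distinct keywords, $(id,op)$ pairs, and $(w_1,j)$ pairs are all polynomially bounded. I instantiate a single extended DDH instance whose \emph{left} indices correspond to distinct keywords (the $G_X$ arguments) and whose \emph{right} indices correspond to the union of distinct $(id,op)$ pairs (the $G_Y$ arguments) and distinct $(w_1,j)$ pairs (the $G_Z$ arguments). $\mathcal{B}$ receives $\{g^{a_i}\}$, $\{g^{b_j}\}$ together with a challenge collection $\{T_{i,j}\}$ in which each $T_{i,j}$ is either $g^{a_i b_j}$ (real) or an independent uniform element of $\mathbb{G}$ (random). $\mathcal{B}$ runs $\mathcal{A}$ as in Game$_3$, maintaining lookup tables that lazily assign a fresh left-index to every new keyword and a fresh right-index to every new $(id,op)$ or $(w_1,j)$, always reusing the previously assigned index when the same input recurs. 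Wherever Game$_3$ would write $g^{a_w b_{id,op}}$ into $A$ or $g^{a_{w_i} c_{w_1,j}}$ into $B$, $\mathcal{B}$ substitutes the matching challenge element $T_{i,j}$. Everything else---the $TSet$ addresses and values driven by $G_T$, the blinding factors $\alpha$ which are already sampled uniformly in Game$_3$, and the SHVE ciphertexts---is produced identically to Game$_3$.

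Third, I would check the two endpoints. If the challenge is real, every $A$- and $B$-entry equals $g^{a_i b_j}$ and therefore has exactly its Game$_3$ distribution, so $\mathcal{B}$ perfectly simulates Game$_3$; if the challenge is random, every such entry is an independent uniform element of $\mathbb{G}$, matching Game$_4$. Thus $\mathcal{B}$'s advantage against extended DDH equals $\mathcal{A}$'s distinguishing advantage, which is the desired contradiction. The main obstacle is not any single algebraic identity but the consistency of the lazy assignment under $\mathcal{A}$'s adaptivity: I must guarantee that the same keyword is always mapped to the same left-index across the many $A$- and $B$-entries in which it appears (and likewise for the right-indices), and that no input ever receives two indices, so that the shared-factor correlations present in Game$_3$ are faithfully reproduced. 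Once this bookkeeping is pinned down, the rest of the reduction is routine.
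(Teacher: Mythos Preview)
Your proposal is correct and follows the same reduction-to-extended-DDH strategy as the paper's proof; the paper's own argument is a brief sketch that simply asserts a distinguisher for Game$_3$/Game$_4$ yields a distinguisher between $g^{G_X(w)\cdot G_Y(id||op)}$ and $g^{\gamma}$ (and similarly for array $B$), whereas you spell out the embedding, the shared-$G_X$ correlation that necessitates the \emph{extended} variant, and the lazy index-assignment bookkeeping. In short, you have supplied the details the paper omits, with no substantive deviation in approach.
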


\begin{proof} 
	Assuming that there exists a probabilistic polynomial-time adversary $\mathcal{B}_2$ that can distinguish between the views of adversary $\mathcal{A}$ in Game$_4$ and Game$_3$. $\mathcal{B}_2$ can easily be used to construct a probabilistic polynomial-time adversary $\mathcal{B}_2^{\prime}$ that can distinguish between $g^{G_X(w)\cdot{G_Y(id||op)}}$ and $g^{\gamma}$. Similarly, $xtoken$ in array B is the same. Obviously, $\mathcal{B}_2^{\prime}$ breaks the extended DDH assumption over the group $\mathbb{G}$, hence there is no such aforementioned probabilistic polynomial-time adversary $\mathcal{B}_2$.
\end{proof}

$\textbf{Game}_5$: Game$_5$ is the same as Game$_4$, except that we regenerate $addr$ and $val$ in the update phase. The function evaluation of the form $G_T(w||cnt||b)(b\in{\{0,1\}})$ is replaced with a function evaluation of the form $G_T(t)$, where $t$ is the timestamp when the update operation is executed. Similarly, we regenerate $saddr$ in the search phase, namely a function evaluation of the form $G_T(t)$ is substituted for the function evaluation of the form $G_T(w||cnt||0)$, where $t$ is the timestamp associated with the corresponding update operation.                                   

\begin{lemma}
	The views of $\mathcal{A}$ in Game$_5$ and Game$_4$ are computationally indistinguishable.
\end{lemma}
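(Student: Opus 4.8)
The plan is to show that $\textbf{Game}_4$ and $\textbf{Game}_5$ are in fact \emph{identically} distributed, so that computational indistinguishability follows at once and no hardness assumption is needed for this hop. The sole change is the \emph{label} fed into the random function $G_T$: during updates the address and value-mask queries switch from $G_T(w||cnt||b)$ to a timestamp-indexed form (with $b\in\{0,1\}$ still separating the address from the mask), and during search the address switches from $G_T(w_1||j||0)$ to $G_T(t_j)$ for the timestamp $t_j$ of the matching update. Since after $\textbf{Game}_1$ the function $G_T$ is a truly uniformly random function, its outputs on pairwise-distinct inputs are independent and uniform; hence relabeling its domain by a bijection leaves the joint output distribution (including any collision events) unchanged. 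The whole argument therefore reduces to exhibiting such a bijection and checking that it is applied consistently across the update and search phases.

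First I would establish the bijection. Because the counter $cnt_w$ held in $Cnt[w]$ is incremented by exactly one on every update touching $w$, the pair $(w,cnt_w)$ pins down a single update operation, logged in the history $\mathcal{O}$ at a unique timestamp $t$; conversely each update timestamp determines exactly one such pair. Thus $\phi:(w,cnt_w,b)\mapsto(t,b)$ is a bijection from the inputs actually queried to $G_T$ during updates onto a set of timestamp labels, with the bit $b$ carried through so that address queries stay disjoint from mask queries. Every $addr$ and $val$ produced in $\textbf{Game}_5$ is then the image under $\phi$ of the corresponding $\textbf{Game}_4$ query, so these values share an identical joint distribution in the two games.

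Next I would verify cross-phase consistency. When answering a search query $q$ with $s$-term $w_1$, the challenger must regenerate, for each $j\in[cnt_{w_1}]$, exactly the address stored at the $j$-th update of $w_1$. From the recorded history it looks up the timestamp $t_j$ of that update---precisely the data exposed by $\mathsf{Upd}(q)$---and sets $saddr_j=G_T(t_j)$, which under $\phi$ equals the $G_T(w_1||j||0)$ of $\textbf{Game}_4$. Hence search addresses also coincide in distribution and the adversary's entire view is unchanged, giving $\big|Pr[\textbf{Game}_5=1]-Pr[\textbf{Game}_4=1]\big|=0$.

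The step I expect to be most delicate is this cross-phase consistency rather than the bijection itself. The relabeling is introduced precisely so that the eventual simulator, which under forward privacy does not know $w$ at update time and cannot evaluate $G_T(w||cnt||b)$, can still produce addresses from the timestamp alone via $G_T(t)$. I must therefore confirm that the challenger's lookup from $(w_1,j)$ to $t_j$ uses only information later available as leakage (namely $\mathsf{Upd}(q)$), and that batch updates---where several triples $(op,id,w)$ are processed together yet occupy distinct entries of $\mathcal{O}$---introduce no domain collisions that would spoil $\phi$. Once these bookkeeping points are checked, the equality of distributions, and hence the lemma, follows.
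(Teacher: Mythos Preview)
Your proposal is correct and follows essentially the same approach as the paper: both argue that because the per-keyword counter increases monotonically, the map from $(w,cnt)$ to the update timestamp $t$ is a bijection, so relabeling the inputs to the truly random function $G_T$ leaves the adversary's view unchanged. Your treatment is more careful than the paper's brief proof---you explicitly track the bit $b$ to keep address and mask queries separate and verify the cross-phase consistency via $\mathsf{Upd}(q)$---but the core idea is the same.
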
 

\begin{proof}
	Note that the counter increases monotonically, and the uniform random function $G_T$ never takes a value on the same input at two different timestamps. Therefore, the values of $G_T$ in Game$_5$ and Game$_4$ are computationally indistinguishable from the point of view of the adversary $\mathcal{A}$.
\end{proof}

$\textbf{Game}_6$: Game$_6$ is the same as Game$_5$, except that we replace the challenger with a simulator $\mathcal{S}$ that could only access the leakage function $\mathcal{L}=(\mathcal{L}_{Setup},\mathcal{L}_{Update},\mathcal{L}_{Search})$ for each update and search query.

\begin{lemma}
	\label{lemma6}
	The views of $\mathcal{A}$ in Game$_6$ and Game$_5$ are computationally indistinguishable.
\end{lemma}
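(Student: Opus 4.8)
The plan is to exhibit an explicit simulator $\mathcal{S}$ that, given access only to $\mathcal{L}=(\mathcal{L}_{Setup},\mathcal{L}_{Update},\mathcal{L}_{Search})$, reproduces the whole transcript that $\mathcal{A}$ observes in Game$_5$, and then to locate the single place where the two views genuinely differ and bridge it by a reduction to the selective simulation-security of SHVE. The key enabling observation is that, after Games 1--5, every quantity in the server's view has already been detached from the underlying keywords: the addresses $addr,saddr$ and the masked values $val$ are evaluations of the random function $G_T$ on pairwise-distinct timestamps, the blinding factors $\alpha$ are independent uniform elements of $\mathbb{Z}_p^{\ast}$, the encrypted frequencies $ecnt_w$ and tokens $ftoken_w$ are determined by the random functions $G_1,G_2$ (hence uniform), and the entries of arrays $A$ and $B$ (all $xtag$ and $xtoken$ values) are uniform in $\mathbb{G}$. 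Thus $\mathcal{S}$ can sample all of these directly, keeping internal tables indexed by timestamp so that repeated or overlapping queries remain consistent.

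Step by step, $\mathcal{S}$ proceeds as follows. In Setup, since $\mathcal{L}_{Setup}=\perp$, it initializes an empty $TSet$ and hands the all-zero Bloom filter to the SHVE simulator to obtain $xtagBF$. In each Update, since $\mathcal{L}_{Update}=\perp$, it samples a fresh timestamp-keyed random address and value for $TSet$, a fresh $\alpha\stackrel{R}{\leftarrow}\mathbb{Z}_p^{\ast}$, a random $xtag$ whose Bloom-filter positions it records, and a re-randomized set $\mathcal{C}$ of uniform $ecnt$ values; none of these require knowing $(op,id,w)$. In Search on $q$, $\mathcal{S}$ reads $(\mathsf{TimeDB}(q),\mathsf{Upd}(q))$: it emits random frequency tokens whose FreqFind matches are programmed so the client recovers the $s$-term with frequency $cnt_{w_1}=|\mathsf{Upd}(q)|$; it recomputes every $saddr_j=G_T(t_j)$ from the leaked timestamps $t_j\in\mathsf{Upd}(q)$ so they collide with the corresponding update entries; it supplies uniform $xtoken$ values, from which the server-side positions are computed deterministically; and it reads the result list directly off $\mathsf{TimeDB}(q)$. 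Here $\mathsf{Upd}(q)$ supplies exactly the addresses the client must regenerate for the $s$-term, while $\mathsf{TimeDB}(q)$ supplies the conjunctive match set that the query outcomes must reflect.

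The indistinguishability argument then isolates the SHVE component. Outside of SHVE, the transcripts of Game$_5$ and Game$_6$ are identically distributed, because every such value is already uniform (or a timestamp-keyed random-function output) and is reconstructible by $\mathcal{S}$ from its own randomness plus the leaked timestamps. The \emph{only} genuine discrepancy is that in Game$_5$ the persistent ciphertext $xtagBF$ and the keys $sBF_j$ encode the true per-$x$-term Bloom-filter memberships, whereas $\mathcal{S}$ cannot know, for a file outside $\mathsf{TimeDB}(q)$, which individual $x$-terms it contains. I would therefore feed the SHVE simulator precisely the position sets $S_j$ together with the predicate outcomes ``True'' for $j$ with $id_j\in\mathsf{DB}(q)$ and ``False'' otherwise, and argue that any distinguisher between Game$_5$ and Game$_6$ yields a distinguisher against the selective simulation-security of SHVE, by embedding the SHVE challenge as $(xtagBF,\{sBF_j\})$ and simulating everything else honestly.

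The main obstacle I expect is exactly this last reduction: showing that the per-$x$-term membership information, which differs between the honestly-built Bloom filter and the simulator's uniform choices, is fully concealed so that only the conjunctive outcome $\mathsf{TimeDB}(q)$ leaks. This is the formal heart of the KPRP-elimination claim, and it forces one to reconcile the \emph{selective} SHVE guarantee (ciphertext committed before keys) with the \emph{adaptive}, multi-query, evolving-ciphertext nature of the DSSE game, where $xtagBF$ is updated over time and fresh keys $sBF_j$ arrive with each adaptive search. The assumption of no Bloom-filter false positives, stated just before Theorem~\ref{thm1}, is used here to guarantee that the programmed predicate pattern equals the true match pattern with certainty, so that the reduction never needs to account for a spurious ``True''.
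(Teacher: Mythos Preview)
Your proposal follows the same high-level route as the paper: define a simulator that, given only $\mathcal{L}_{Update}=\perp$ and $\mathcal{L}_{Search}(q)=(\mathsf{TimeDB}(q),\mathsf{Upd}(q))$, regenerates the transcript of Game$_5$ by sampling the now-randomized $addr$, $val$, $\alpha$, $xtag$, $xtoken$, and frequency values, using $\mathsf{Upd}(q)$ to recover the $s$-term timestamps (hence the $saddr_j$'s) and $\mathsf{TimeDB}(q)$ to fix the final result set. This is exactly what the paper does, and the paper also notes that equality of $\mathsf{Upd}(q_1)$ and $\mathsf{Upd}(q_2)$ lets $\mathcal{S}$ detect repeated $s$-terms.

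Where you diverge is in rigor. The paper's proof of this lemma is a single short paragraph that ends by asserting the simulated transcripts are \emph{identical} to those in Game$_5$; it never isolates the SHVE components $(xtagBF,\{sBF_j\})$, never invokes the selective simulation-security of SHVE, and never introduces a hybrid or reduction for it. In effect the paper treats the SHVE layer as already ``handled'' once Games~1--5 have randomized everything else, whereas you correctly identify that the only place where Game$_5$ and a leakage-only simulator can genuinely differ is precisely in whether the per-$x$-term Bloom-filter memberships (the KPRP information) bleed through $(xtagBF,\{sBF_j\})$, and you propose to close that gap via the SHVE simulator fed with the positions $S_j$ and the predicate outcomes derived from $\mathsf{TimeDB}(q)$.

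Your closing concern---reconciling the \emph{selective} SHVE guarantee with an \emph{adaptive} DSSE adversary and an $xtagBF$ that is re-encrypted at every update---is a real technical point that the paper's proof does not address at all. So your approach is consistent with, but strictly more careful than, the paper's; you have put your finger on the part of the argument that the paper's sketch leaves implicit.
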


\begin{proof}
	The Simulator $\mathcal{S}$ can access an empty update leakage function $\mathcal{L}_{Update}(op,id,w)=\perp$ and search leakage function $\mathcal{L}_{Search(q)}=(\mathsf{TimeDB}(q),\mathsf{Upd}(q))$, ensuring that $\mathcal{S}$ does not have access to the actual queries issued by $\mathcal{A}$. The rest of the variables are generated by $\mathcal{S}$ as done by the challenger in Game$_5$. 
	For the search phase, $\mathcal{S}$ can learn the update frequency involving the $s$-term and the timestamp of each operation, which is expressed as $\mathsf{Upd}(q)$. It can also learn the final result of conjunctive search queries, that is, a set of document identifiers together with operation timestamps expressed as $\mathsf{TimeDB}(q)$. In addition, $\mathcal{S}$ can infer that the two conjunctive search queries $q_1$ and $q_2$ have the same $s$-term, which is subsumed by $\mathsf{Upd}(q_1)$ and $\mathsf{Upd}(q_2)$. Note that from the perspective of adversary $\mathcal{A}$, the transcripts generated by $\mathcal{S}$ are identical to the corresponding  transcripts generated by the challenger in Game$_5$.
\end{proof}

The proof of Lemma \ref{lemma6} is complete, that is, Theorem 1 is proved. 
\end{document}